\title{Ambiguity and Incomplete Information in Categorical Models of Language}
\author{Dan Marsden
  \institute{University of Oxford}
  \email{daniel.marsden@cs.ox.ac.uk}}
\theoremstyle{plain}
\newtheorem{theorem}{Theorem}
\newtheorem{lemma}[theorem]{Lemma}
\newtheorem{proposition}[theorem]{Proposition}
\theoremstyle{definition}
\newtheorem{definition}[theorem]{Definition}
\newtheorem{example}[theorem]{Example}
\newtheorem{counter}[theorem]{Counterexample}
\theoremstyle{remark}
\newtheorem{remark}[theorem]{Remark}
\numberwithin{theorem}{section}
\newcommand{\eilmo}[1]{\ensuremath{\operatorname{\operatorname{EM}}(#1)}}
\newcommand{\ket}[1]{\ensuremath{| #1 \rangle}}
\newcommand{\catname}[1]{\mathbf{#1}}
\newcommand{\convex}{\ensuremath{\catname{Convex}}\xspace}
\newcommand{\subconvex}{\ensuremath{\catname{Subconvex}}\xspace}
\newcommand{\standardrel}{\ensuremath{\catname{Rel}}\xspace}
\newcommand{\cpm}[1]{\ensuremath{\catname{CPM}(#1)}\xspace}
\newcommand{\fdhilb}{\ensuremath{\catname{FdHilb}}\xspace}
\newcommand{\cset}{\ensuremath{\catname{Set}}\xspace}
\newcommand{\pset}{\ensuremath{\catname{Set_\bullet}}\xspace}
\newcommand{\jslat}{\ensuremath{\catname{JSLat}}\xspace}
\newcommand{\ajslat}{\ensuremath{\catname{AJSLat}}\xspace}
\newcommand{\subdistmonad}{S}
\newcommand{\distmonad}{D}
\newcommand{\powersetmonad}{P}
\newcommand{\finitepowersetmonad}{\powersetmonad_\omega}
\newcommand{\nonemptyfinitepowersetmonad}{\finitepowersetmonad^+}
\newcommand{\define}[1]{{\bf #1}}
\DeclareMathOperator{\support}{supp}
\DeclareMathOperator{\strength}{st}
\DeclareMathOperator{\costrength}{st'}
\DeclareMathOperator{\doublestrength}{dst}
\begin{document}
\maketitle

\begin{abstract}
  We investigate notions of ambiguity and partial information in
  categorical distributional models of natural language.
  Probabilistic ambiguity has previously been studied in
  \cite{PiedeleuKartsaklisCoeckeSadrzadeh2015, Piedeleu2014, Kartsaklis2014} using Selinger's
  CPM construction. This construction works well for models built upon vector spaces, as has
  been shown in quantum computational applications. Unfortunately, it
  doesn't seem to provide a satisfactory method for introducing mixing in other compact closed
  categories such as the category of sets and binary relations. We therefore lack a uniform strategy for
  extending a category to model imprecise linguistic information.

  In this work we adopt a different approach. We analyze different forms of ambiguous
  and incomplete information, both with and without quantitative probabilistic data. Each
  scheme then corresponds to a suitable enrichment of the category in which we model language.
  We view different monads as encapsulating the informational behaviour of interest, by
  analogy with their use in modelling side effects in computation. Previous results of Jacobs
  then allow us to systematically construct suitable bases for enrichment.
  
  We show that we can freely enrich arbitrary dagger compact closed categories in order
  to capture all the phenomena of interest,
  whilst retaining the important dagger compact closed structure. This allows us to construct
  a model with real convex combination of binary relations that makes non-trivial use of the
  scalars. Finally we relate our various different enrichments, showing that finite subconvex algebra
  enrichment covers all the effects under consideration.
\end{abstract}

\section{Introduction}
The categorical distributional approach to natural language processing~\cite{CoeckeSadrzadehClark2010} aims
to construct the meaning of a sentence from its grammatical structure and
the meanings of its parts. The grammatical structure of language can be
described using pregroup grammars~\cite{Lambek1997}. The distributional approach to natural language
models the meanings of word as vectors of statistics in finite dimensional real
vector spaces. Both the category of real vector spaces and linear maps, and pregroups are examples
of monoidal categories in which objects have duals.
This common structure is the key observation that allows us to functorially transfer grammatical structure to linear maps modelling meaning in a compositional manner.

Recent work in categorical models of language has investigated ambiguity, for example words with multiple meanings,
\cite{PiedeleuKartsaklisCoeckeSadrzadeh2015, Piedeleu2014, Kartsaklis2014}.
These papers exploit analogies with quantum mechanics, using density matrices to model
partial and ambiguous information.
In order to do this, the category in which meanings are interpreted must change.
Selinger's CPM construction~\cite{Selinger2007} is exploited to construct compact closed categories in which ambiguity can be described.
This construction takes a compact closed category~$\mathcal{C}$ and produces a new compact closed category~$\cpm{\mathcal{C}}$.
When applied to the category of finite dimensional Hilbert spaces and linear maps, \fdhilb, the resulting category~\cpm{\fdhilb}
is equivalent to the category of finite dimensional Hilbert spaces and completely positive maps. In this way the construction takes the setting for
pure state quantum mechanics and produces exactly the right setting for mixed state quantum mechanics. It is therefore tempting
to consider the CPM construction as a mixing device for compact closed categories.
This perspective was adopted in~\cite{PiedeleuKartsaklisCoeckeSadrzadeh2015} by starting with the category \standardrel of sets and binary relations
and interpreting \cpm{\standardrel} as a toy model of ambiguity.
As argued in \cite{Marsden2015, Gogioso2015}, there are aspects of~\cpm{\standardrel} that conflict with its interpretation as a setting for mixing. Specifically:
\begin{itemize}
\item There are pure states that can be formed as a convex mixture of two mixed states
\item There are convex combinations of distinct pure states that give pure states
\end{itemize}
It could be argued that this anomalous behaviour is due to the restricted nature of the scalars in \standardrel. This then points to another weakness of
\cpm{\standardrel} as a setting for ambiguity and mixing. We may wish to say that the word bank is $90\%$ likely to mean a financial institution
and $10\%$ likely to refer to the boundary of a river. We can express this in \cpm{\fdhilb} as the scalars are sufficiently rich. In~\cpm{\standardrel} the
best we can do is say that it's either one or the other, with all the quantitative information being lost.

In this paper we investigate some alternative models of ambiguity in the compact closed setting. We consider
a variety of different interpretations of what it might mean to have ambiguous or limited information, and
how these can be described mathematically. Specifically:
\begin{itemize}
\item In section~\ref{sec:unquantified} we describe constructions that model ambiguous and incomplete information
  in a non-quantitative manner. We show that compact closed categories can be freely extended so as to allow the
  modelling of incomplete information, ambiguity and a mixture of both phenomena.
\item In section~\ref{sec:quantified} we extend our constructions to describe quantitative ambiguous and incomplete
  information. Again we show that compact closed categories can be freely extended in order to model such features.
\item Proposition~\ref{prop:embeddings} shows that these various informational notions embed into each other.
\end{itemize}
Our general perspective is to enrich the homsets of our categories in order to model the language features we are interested in.
In order to do this systematically, we exploit some basic monad theory. Monads are commonly used to describe computational
effects such as non-determinism, exceptions and continuations~\cite{Moggi1991, Wadler1995}.
In our case we instead view them as models of informational effects in natural language applications.
Monads have previously been used in models of natural language, see for example~\cite{Shan2001}.
Work of Jacobs~\cite{Jacobs1994} provides the connection between a certain class of monads and categories
that provide good bases for enriched category theory.
We aim to give explicit constructions of the categories of concrete interest throughout, rather than pursuing a policy of maximum abstraction.
Our systematic approach means we should be able to incorporate additional informational features in a similar manner.

We assume familiarity with elementary category theory, and the notions of compact closed and dagger compact closed categories \cite{AbramskyCoecke2004}.

\section{Monads}
\label{sec:monads}
We will now outline the necessary background on monads required in later sections, and introduce the monads
that will be of particular interest. The material in this section is standard,
good sources for further background are~\cite{MacLane1978, BarrWells2005, Borceux1994b, BarrWells2005, Jacobs2012b}.
In this paper we will only be interested in monads on the category~\cset of sets and total functions,
although we will state some definitions more generally where it is cleaner to do so.
\begin{definition}[Monad]
  A \define{monad} on a category~$\mathcal{C}$ is triple consisting of:
  \begin{itemize}
  \item An endofunctor $T : \mathcal{C} \rightarrow \mathcal{C}$
  \item A \define{unit} natural transformation~$\eta : 1 \Rightarrow T$
  \item A \define{multiplication} natural transformation $\mu : TT \Rightarrow T$
  \end{itemize}
  such that the following three axioms hold:
  \begin{equation*}
    \mu \circ (\eta * T) = T\qquad\qquad
    \mu \circ (T * \eta) = T\qquad\qquad
    \mu \circ (\mu * T) = \mu \circ (T * \mu)
  \end{equation*}
\end{definition}
We now introduce the monads of interest in this paper, and relate them to computational behaviour. Similarly
to~\cite{PlotkinPower2002}, we emphasize that monads are induced by algebraic operations modelling computational,
or in our case informational, behaviour.
\begin{definition}[Lift Monad]
  The lift monad~$((-)_\bot, \eta, \mu)$ is defined as follows:
  \begin{itemize}
  \item The functor component is given by the coproduct of functors~$1 + \{ \bot \} : \cset \rightarrow \cset$.
  \item The unit and multiplication are given componentwise by:
    \begin{equation*}
      \eta_X(x) = x \qquad\qquad
      \mu_X(x) =
      \begin{cases}
        x \text{ if } x \in X\\
        \bot \text{ otherwise}
      \end{cases}
    \end{equation*}
  \end{itemize}
  The lift monad is commonly used to describe computations that can diverge.
\end{definition}
\begin{definition}[Powerset Monads]
  The \define {finite powerset monad}~$\finitepowersetmonad$ has functor component the covariant finite powerset functor.
  The unit sends an element to the corresponding singleton, and the multiplication is given by taking unions.
  The \define{non-empty finite powerset monad}~$\nonemptyfinitepowersetmonad$ arises in an analogous way by restricting the sets under consideration.
  The finite powerset monad is used to model finitely bounded non-determinism, and the non-empty finite powerset monad eliminates the possibility of divergence.
\end{definition}
\begin{definition}[Finite Distribution Monads]
  \label{def:distmonad}
  The \define{finite~distribution~monad} has functor component:
  \begin{align}
    \distmonad : \cset &\rightarrow \cset \nonumber \\
    \label{func:distmonad} X &\mapsto \{ d : X \rightarrow [0,1] \mid d \text{ has finite support and } \sum_x d(x) = 1 \}  \\
    f : X \rightarrow Y &\mapsto \lambda d\,y. \sum_{x \in f^{-1} \{y\}} d(x) \nonumber
  \end{align}
  The unit and multiplication are given componentwise by:
  \begin{equation*}
    \eta_X(x) = \delta_x \qquad\qquad \mu_X(d)(x) = \sum_{e \in \support(e)} d(e)e(x)
  \end{equation*}
  where~$\delta_x$ is the Dirac delta function and~$\support(e)$ is the support of~$e$.

  The \define{finite subdistribution monad}~$(S,\eta,\mu)$ has identical structure, except that we
  weaken the condition in equation~\eqref{func:distmonad} to:
  \begin{equation*}
    \sum_x d(x) \leq 1
  \end{equation*}
  So our finite distributions are now sub-normalized rather than normalized to~$1$.
  Both the finite distribution and subdistribution monads are used to model probabilistic computations.
  Intuitively the subdistribution monad provides scope for diverging behaviour in the ``missing''
  probability mass.
\end{definition}
\begin{remark}
  We adopt a convenient notational convention from~\cite{Jacobs2011b} and write finite distributions as formal sums~$\sum_i p_i \ket{ x_i }$,
  where we abuse the physicists ket notation to indicate the sum is a formal construction. Using
  this notation, the unit of the (sub)distribution monad is the map~$x \mapsto \ket{ x }$
  and multiplication is given by expanding out sums of sums in the usual manner.
\end{remark}
Each monad can be canonically related to a certain category of algebras.
\begin{definition}[Eilenberg-Moore Algebras]
  Let $(T,\eta,\mu)$ be a monad on~$\mathcal{C}$. An \define{Eilenberg-Moore algebra}~\cite{EilenbergMoore1965} for~$T$
  consists of an object~$A$ and a morphism~$a : T A \rightarrow A$ satisfying the following axioms:
  \begin{equation*}
    a \circ \eta_A = 1 \qquad\qquad a \circ \mu_A = a \circ T a
  \end{equation*}
  A morphism of Eilenberg-Moore algebras of type $(A,a) \rightarrow (B,b)$ is a morphism in~$\mathcal{C}$ such that:
  \begin{equation*}
    h \circ a = b \circ T h
  \end{equation*}
  The category of Eilenberg-Moore algebras and their homomorphisms will be denoted \eilmo{T}.
\end{definition}
\begin{example}
  For the monads under consideration we note that:
  \begin{itemize}
  \item The Eilenberg-Moore category of the lift monad is equivalent to the category of pointed sets and functions that
    preserve the distinguished element, denoted \pset.
  \item The Eilenberg-Moore category of the finite powerset monad is equivalent to the category of join semilattices
    and homomorphisms, denoted \jslat.
  \item The Eilenberg-Moore category of the non-empty finite powerset monad is equivalent to the category of affine join semilattices
    and homomorphisms, denoted \ajslat.
  \item The Eilenberg-Moore category of the finite distribution monad is the category of convex algebras and functions
    commuting with forming convex combinations, denoted \convex. This category has received a great deal of attention
    in for example \cite{Jacobs2010, JacobsMandemakerFurber2016, Fritz2009}.
  \item The Eilenberg-Moore category of the finite subdistribution monad is the category of subconvex algebras, that is
    algebras that can form subconvex combinations of elements in a coherent manner.
    The morphisms are functions that commute with forming subconvex combinations. We denote this category \subconvex.
  \end{itemize}
\end{example}
We consider commutative monads on the category~\cset, and specialize their definition appropriately.
\begin{definition}[Commutative Monad]
  Let~$(T, \eta, \mu)$ be a~\cset monad. There are a canonical \define{strength} and \define{costrength}
  natural transformations:
  \begin{trivlist} \item
    \begin{minipage}{0.495\textwidth}
      \begin{align*}
        \strength_{X,Y} : X \times T Y &\rightarrow T(X \times Y)\\
        (x,t) &\mapsto T(\lambda y. (x, y))(t)
      \end{align*}
    \end{minipage}
    \begin{minipage}{0.495\textwidth}
      \begin{align*}
        \costrength_{X,Y} : T X \times Y &\rightarrow T(X \times Y)\\
        (t,y) &\mapsto T(\lambda x. (x, y))(t)
      \end{align*}
    \end{minipage}
  \end{trivlist}
  The monad is said to be a \define{commutative monad} \cite{Kock1972} if the following equation holds for all~$X,Y$:
  \begin{equation*}
    \mu_{X \times Y} \circ T(\costrength_{X,Y}) \circ \strength_{T X, Y} = \mu_{X \times Y} \circ T(\strength_{X,Y}) \circ \costrength_{X,T Y}
  \end{equation*}
  This composite is then called the \define{double strength}, denoted~$\doublestrength$.
\end{definition}
\begin{remark}
  Monads are intimately related to the topic of universal algebra. The Eilenberg-Moore algebras for a~\cset monad
  can be presented by operations and equations, if we permit infinitary operations. All the monads in this paper
  are in fact \define{finitary monads}, meaning they can presented by operations of finite arity.
  Let $\phi$ and $\psi$ be operations of arities~$m$ and $n$ respectively. These operations are said to commute with
  each other if the following equation holds:
  \begin{equation*}
    \psi(\phi(x_{1,1},...,x_{1,m}),...,\phi(x_{n,1},...,x_{n,m})) = \phi(\psi(x_{1,1},...,x_{n,1}),...,\psi(x_{1,m},...,x_{n,m}))
  \end{equation*}
  If we unravel the definition of commutative monad, it says that all the operations in a presentation commute with each other.
  We can also phrase this as every operation being a homomorphism.
  More detailed discussion of connections to universal algebra and presentations can be found in \cite{Manes1976}.
\end{remark}
\begin{lemma}
  Each of the lift, powerset, finite powerset, finite non-empty power, distribution and subdistribution monads
  are commutative.
\end{lemma}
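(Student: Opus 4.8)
The plan is to verify, for each monad~$T$ in the list, the defining equation of a commutative monad: that the two canonical maps $TX \times TY \to T(X\times Y)$ obtained by interleaving the strength and costrength in the two possible orders coincide. Since the monads here are concrete, the cleanest route is to compute this common map --- the double strength $\doublestrength_{X,Y}$ --- explicitly on a generic pair $(s,t)\in TX\times TY$ and observe that the resulting description is symmetric in $s$ and $t$.

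For the lift monad, $\doublestrength_{X,Y}$ sends $(x,y)$ to the pair $(x,y)$ when both coordinates are genuine elements of $X$ and $Y$, and to $\bot$ as soon as either coordinate is $\bot$; this is patently symmetric, so the two orders agree. For the finite, finite non-empty, and full powerset monads, both orders send $(U,V)$ to the rectangle $U\times V \subseteq X\times Y$, using that direct images preserve binary products and that $\mu$ is given by union. For the distribution monad, both orders send $(d,e)$ to the product distribution $(x,y)\mapsto d(x)\,e(y)$, and the two computations agree precisely because multiplication of reals is commutative; for the subdistribution monad the same computation applies verbatim, with $\sum_{x,y} d(x)e(y) = \bigl(\sum_x d(x)\bigr)\bigl(\sum_y e(y)\bigr) \le 1$ showing the result is again a subdistribution.

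Alternatively --- and this is the conceptual reading signposted by the preceding remark --- one presents the Eilenberg--Moore algebras of $T$ by operations and equations (finitary except for the full powerset monad, where infinitary joins are needed) and checks that all operations pairwise commute, equivalently that each operation is a homomorphism: the distinguished constant of \pset commutes with itself trivially; in \jslat and \ajslat binary join commutes with itself by associativity and commutativity, in the form $(a\vee b)\vee(c\vee d) = (a\vee c)\vee(b\vee d)$, and with the bottom constant by idempotence; and in \convex and \subconvex the binary (sub)convex-combination operations $+_r$ obey the parallelogram identity $(a +_s b) +_r (c +_s d) = (a +_r c) +_s (b +_r d)$, an immediate consequence of the distributivity and commutativity of real arithmetic, and likewise commute with the zero constant of a subconvex algebra.

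In every case the argument reduces to a short symmetry observation, so I do not expect a serious obstacle; the one spot warranting a moment's care is the subdistribution monad, where one must confirm that combining the two partial computations creates no probability mass --- but this is exactly the inequality displayed above.
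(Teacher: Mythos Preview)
Your proposal is correct. The paper itself gives no proof of this lemma at all: it is stated as a bare assertion, presumably regarded as standard or folklore, with the surrounding remarks about presentations and commuting operations offered only as informal context. Your write-up therefore goes well beyond the paper, supplying both the direct computation of the double strength (symmetric in each case) and the algebraic-theory reformulation in terms of commuting operations. Both routes are sound; the explicit double-strength calculation is the most self-contained, while the operations-commute viewpoint dovetails with the paper's remark preceding the lemma. Nothing is missing, and the subdistribution check you flag is exactly the right sanity point.
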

\begin{remark}
  It is interesting that all the notions of partial information and ambiguity considered in this paper give rise
  to commutative monads. Possibly we could regard this as showing these informational effects are independent
  of the order in which they are built up?
\end{remark}
Clearly many monads are not commutative:
\begin{example}
  The \define{list monad} has a functor component that sends a set to finite lists of its elements. The unit
  maps an element to the corresponding single element list and the multiplication concatenates lists of lists.
  The Eilenberg-Moore algebras of this monad are arbitrary, not necessarily commutative, monoids.
  Unsurprisingly, this monad is not commutative.
\end{example}
The following proposition captures the essential properties of Eilenberg-Moore categories of commutative monads that we will need, all in one place.
The key symmetric monoidal closed structure is due to work of Jacobs~\cite{Jacobs1994}, the other properties are well known.
\begin{proposition}
  Let~$(T,\eta,\mu)$ be a commutative monad on \cset.
  The category \eilmo{T}:
  \begin{itemize}
  \item Is a symmetric monoidal closed category.
  \item Has universal bimorphisms for the monoidal tensor.
  \item Has monoidal unit given by the free algebra $(\{*\}, !)$.
  \item The tensor product $\mu_X \otimes \mu_Y$ is isomorphic to $\mu_{X \times Y}$.
  \item Is complete.
  \item Is cocomplete.
  \end{itemize}
\end{proposition}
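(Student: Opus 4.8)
The plan is to handle completeness and cocompleteness by standard monad theory, to invoke Jacobs'~\cite{Jacobs1994} theorem for the symmetric monoidal closed structure, and then to read the remaining three clauses off the universal property of the tensor.

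Completeness is immediate: the forgetful functor $U : \eilmo{T} \rightarrow \cset$ is monadic, hence creates limits, and \cset is complete, so \eilmo{T} is complete with limits computed on underlying sets. Cocompleteness is likewise classical: each monad of interest here is finitary, so \eilmo{T} is locally finitely presentable and in particular cocomplete; in general it suffices, by Linton's theorem, that \eilmo{T} have coequalizers of reflexive pairs, since \cset is cocomplete.

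For the monoidal closed structure, call a function $f : A \times B \rightarrow C$ between the carriers of $T$-algebras $(A,a)$, $(B,b)$, $(C,c)$ a \emph{bimorphism} if each $f(x,-)$ and each $f(-,y)$ is a homomorphism. Writing $F \dashv U$ for the free--forgetful adjunction, I would set
\[
  A \otimes B \;:=\; \mathrm{coeq}\Bigl( F(a \times b),\; \mu_{A \times B} \circ T(\doublestrength_{A,B}) \;:\; F(TA \times TB) \rightrightarrows F(A \times B) \Bigr),
\]
a coequalizer of a reflexive pair (common section $F(\eta_A \times \eta_B)$), available by cocompleteness. Precomposing the coequalizing arrow with $\eta_{A \times B}$ gives a bimorphism $A \times B \rightarrow A \otimes B$ through which every bimorphism out of $A \times B$ factors uniquely as a homomorphism, which is precisely the clause on universal bimorphisms. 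The internal hom $[B,C]$ is the homset $\eilmo{T}(B,C)$ equipped with the pointwise algebra structure it inherits inside the power $C^{UB}$; the facts that this subset is closed under the pointwise operations and that $[B,-]$ is right adjoint to $- \otimes B$, with $\otimes$ symmetric, associative and unital, are exactly where commutativity of $T$ is used, and constitute the substance of Jacobs'~\cite{Jacobs1994} theorem. This is the one step I would cite rather than reprove, and the only part I expect to require genuine work; the rest is formal.

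It then remains to identify the unit and to verify $\mu_X \otimes \mu_Y \cong \mu_{X \times Y}$, both of which drop out of representability and $F \dashv U$. A bimorphism $F\{*\} \times A \rightarrow C$ is, by freeness in its first variable, determined by the homomorphism $a \mapsto f(*,a) : A \rightarrow C$, so $\eilmo{T}(F\{*\} \otimes A, C) \cong \eilmo{T}(A, C)$ naturally in $C$; hence the free algebra on $\{*\}$ is the monoidal unit --- and this is genuinely the free rather than the terminal algebra, being for instance the two-element join semilattice in \jslat and the interval $[0,1]$ in \subconvex. Likewise a bimorphism $FX \times FY \rightarrow C$, extended freely in each variable, is the same datum as an arbitrary function $X \times Y \rightarrow UC$, so $\eilmo{T}(FX \otimes FY, C) \cong \cset(X \times Y, UC) \cong \eilmo{T}(F(X \times Y), C)$ naturally in $C$, whence $FX \otimes FY \cong F(X \times Y)$ by Yoneda; that is, $\mu_X \otimes \mu_Y \cong \mu_{X \times Y}$, and taking $X = Y = \{*\}$ re-exhibits the monoidal unit as $F\{*\} = (T\{*\}, \mu_{\{*\}})$.
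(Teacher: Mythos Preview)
Your proposal is correct. The paper's own proof is pure citation: (co)completeness is declared ``standard'' for categories monadic over \cset, and all four monoidal clauses are attributed to lemmas~5.1--5.3 of Jacobs~\cite{Jacobs1994}. You follow the same path for (co)completeness and for the core SMCC claim, but rather than also citing Jacobs for the identification of the unit and for $\mu_X \otimes \mu_Y \cong \mu_{X\times Y}$ you derive these directly from the universal bimorphism property via a representability/Yoneda argument, which makes those two clauses self-contained and transparent. Your explicit reflexive-coequalizer presentation of $A \otimes B$ is likewise more than the paper offers and is one of the standard constructions. One minor point: for non-finitary $T$ your appeal to Linton's theorem states the hypothesis (existence of reflexive coequalizers in $\eilmo{T}$) without verifying it, but this is part of the same ``standard'' package for categories monadic over \cset, so no harm is done.
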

\begin{proof}
  Completeness and cocompleteness of categories that are monadic over \cset is standard.
  The category \cset is a SMCC via its products and exponentials. \cset is complete so we can use \cite[lemma 5.3]{Jacobs1994}.
  The additional properties of the monoidal structure come from \cite[lemmas 5.1,5.2]{Jacobs1994}.
\end{proof}
\begin{remark}
  We avoid technical discussion of universal bimorphisms, details can be found in~\cite{Kock1971, Jacobs1994}. The essential
  idea is to generalize the universal property of the tensor product of vector spaces, and their relationship to bilinear functions.
  So in the set theoretic case, homomorphisms out of our tensors will bijectively correspond to functions out of the cartesian product
  that are homomorphisms in each component separately.
\end{remark}

\section{Enriched Categories}
\label{sec:enriched}
An enriched category is a category in which the homsets have additional structure that interacts well with composition.
\begin{example}
  The following are natural examples of enriched categories:
  \begin{itemize}
  \item As a trivial example, ordinary locally small categories are~\cset-enriched.
  \item A category is poset enriched if it's homsets have a poset structure and composition in monotone with respect to that structure.
    For example the category \standardrel is poset enriched.
  \item In the categorical quantum mechanics community, a category is said to have a \define{superposition rule} ~\cite{HeunenVicary2016},
    or be a \define{process theory with sums} \cite{CoeckeKissinger2016},
    if its homsets carry commutative monoid structure that is suitably compatible with composition (and possibly additional structure).
  \end{itemize}
\end{example}
We have insufficient space for a detailed outline of the parts of enriched category theory we require, we refer the reader to~\cite{Kelly2005, Borceux1994b} for background.
The informal discussion above should hopefully be sufficient to understand the discussions in later sections.

The idea of this paper is that complete and cocomplete categories with symmetric monoidal closed structure provide a very
good base of enrichment for enriched category theory.
If we select a commutative monad that captures the linguistic feature we are interested in, we can then consider categories enriched
in the corresponding algebraic structure.

The universal bimorphism property of the monoidal structure of Eilenberg-Moore categories of commutative monads allows us
provide concrete conditions for enrichment in our categories of interest:
\begin{proposition}
  \label{prop:enrichments}
  A category $\mathcal{C}$:
  \begin{itemize}
  \item Is \pset-enriched if its homsets have pointed set structures such that:
    \begin{equation}
      \label{eq:botpreserve}
      \bot \circ f = \bot \quad\text{and}\quad f \circ \bot = \bot
    \end{equation}
  \item Is \ajslat-enriched if its homsets have affine join semilattice structures such that:
    \begin{equation}
      \label{eq:joinpreserve}
      (f \vee g) \circ h = (f \circ h) \vee (g \circ h) \quad\text{and}\quad f \circ (g \vee h) = (f \circ g) \vee (f \circ h)
    \end{equation}
  \item Is \jslat-enriched if its homsets have join semilattice lattice structures such that both the equations of~\eqref{eq:botpreserve} and~\eqref{eq:joinpreserve} hold.
  \item Is \convex-enriched if its homsets have convex algebra structures such that:
    \begin{equation}
      \label{eq:convexpreserve}
      (\sum_i p_i f_i) \circ g = \sum_i p_i (f_i \circ g) \quad\text{and}\quad f \circ (\sum_i p_i g_i) = \sum_i p_i (f \circ g_i)
    \end{equation}
  \item Is \subconvex-enriched if its homsets have subconvex algebra structures such that the equations~\eqref{eq:convexpreserve} hold for all subconvex combinations.
  \end{itemize}
\end{proposition}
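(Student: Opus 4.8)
The plan is to unwind the definition of a $\mathcal{V}$-enriched category, for $\mathcal{V}$ each of $\pset$, $\ajslat$, $\jslat$, $\convex$ and $\subconvex$, and to verify its ingredients using the universal bimorphism property of the monoidal tensor recorded in the preceding proposition. To give such an enrichment one must specify: each homset as an object of $\mathcal{V}$; for every triple $A,B,C$ a composition morphism $\mathcal{C}(B,C) \otimes \mathcal{C}(A,B) \to \mathcal{C}(A,C)$ of $\mathcal{V}$; for every $A$ a morphism $I \to \mathcal{C}(A,A)$ naming the identity, where $I$ is the monoidal unit; and then one must check the associativity and unit coherence diagrams. The homsets come equipped with the relevant algebraic structure by hypothesis, so the real work is constructing the composition and identity morphisms and verifying coherence.

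Composition in $\mathcal{C}$ is a function $c_{A,B,C} \colon \mathcal{C}(B,C) \times \mathcal{C}(A,B) \to \mathcal{C}(A,C)$ on the cartesian product. The displayed equations in the clause at hand --- \eqref{eq:botpreserve} for $\pset$, \eqref{eq:joinpreserve} for $\ajslat$, both of these for $\jslat$, \eqref{eq:convexpreserve} for $\convex$, and \eqref{eq:convexpreserve} extended to arbitrary subconvex combinations for $\subconvex$ --- say exactly that $c_{A,B,C}$ is a homomorphism in each variable separately, i.e.\ a bimorphism. By the universal property of the tensor it factors through a unique $\mathcal{V}$-morphism $\mathcal{C}(B,C) \otimes \mathcal{C}(A,B) \to \mathcal{C}(A,C)$, which we take as enriched composition. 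For identities, the monoidal unit is the free algebra on a one-element set, so $\mathcal{V}$-morphisms out of $I$ into any algebra correspond bijectively to elements of that algebra; the element $\mathrm{id}_A$ of $\mathcal{C}(A,A)$ names the sought morphism $I \to \mathcal{C}(A,A)$.

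The coherence conditions are equalities of $\mathcal{V}$-morphisms out of two- or three-fold tensors of homsets. To check them I would use that a $\mathcal{V}$-morphism out of an iterated tensor is determined by its composite with the canonical map from the corresponding iterated cartesian product: two such morphisms that agree after this precomposition are equal. This is obtained by iterating the universal bimorphism property, using associativity of $\otimes$. Precomposing each coherence diagram with the relevant canonical map reduces it to an equation of ordinary functions on a cartesian product, and in every case this is an identity already valid in $\mathcal{C}$: associativity of enriched composition becomes $(h \circ g) \circ f = h \circ (g \circ f)$, and the unit conditions become $f \circ \mathrm{id}_A = f$ and $\mathrm{id}_B \circ f = f$.

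The main obstacle is the bookkeeping in this last step: one must make precise that the bimorphism universal property genuinely iterates, so that maps out of $\mathcal{C}(C,D) \otimes \mathcal{C}(B,C) \otimes \mathcal{C}(A,B)$ are pinned down by trimorphisms on the triple product, and one must track the associator and unitor isomorphisms of $\mathcal{V}$ so that they match the coherence cells of an enriched category. Everything else is a direct translation through the bijections of the preceding proposition. In particular the five cases need no separate treatment: they differ only in which algebraic operations the phrase ``homomorphism in each variable'' refers to, and the bilinearity equations in each clause are precisely the statement that composition respects those operations.
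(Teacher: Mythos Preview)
Your proposal is correct and follows precisely the approach the paper intends: the sentence immediately preceding the proposition states that ``the universal bimorphism property of the monoidal structure of Eilenberg-Moore categories of commutative monads allows us to provide concrete conditions for enrichment,'' and the paper then states the proposition without further proof. Your argument---reading the displayed equations as bilinearity of composition, lifting through the universal bimorphism to a $\mathcal{V}$-morphism on the tensor, picking out identities via the free-algebra description of the unit, and reducing the coherence diagrams to the ordinary associativity and unit laws of~$\mathcal{C}$ by precomposing with the canonical maps from cartesian products---is exactly the unwinding the paper leaves to the reader.
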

For a given cocomplete symmetric monoidal closed category $\mathcal{V}$, we can form the free $\mathcal{V}$-enriched category over an ordinary category.
This construction will be exploited in the later sections, details can be found in~\cite{Kelly2005, Borceux1994b}.

\section{Unquantified Mixing}
\label{sec:unquantified}
We begin by considering probably the simplest case, in which we have incomplete information. For
example, I simply don't know the meaning of the word ``logolepsy''.
In order to model this, we enrich our category in pointed sets, with the distinguished
element denoting missing information.
\begin{definition}
  \label{def:botc}
  For category~$\mathcal{C}$ we define the category~$\mathcal{C}_\bot$ as having:
  \begin{itemize}
  \item {\bf Objects}: The same objects as $\mathcal{C}$
  \item {\bf Morphisms}: We define $\mathcal{C}_\bot(A,B) = \mathcal{C}(A,B)_\bot$
  \end{itemize}
  Identities are as in $\mathcal{C}$. Composition is given as in~$\mathcal{C}$, extended
  with the rules:
  \begin{equation*}
    \bot \circ f = \bot \quad\text{and}\quad g \circ \bot = \bot
  \end{equation*}
\end{definition}
\begin{proposition}
  For a category~$\mathcal{C}$, $\mathcal{C}_\bot$ is the free pointed set enriched category over~$\mathcal{C}$.
\end{proposition}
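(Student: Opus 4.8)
The plan is to establish the universal property directly: exhibit a faithful identity-on-objects functor $\iota : \mathcal{C} \to \mathcal{C}_\bot$, check that $\mathcal{C}_\bot$ is genuinely \pset-enriched, and then show every functor from $\mathcal{C}$ into a \pset-enriched category $\mathcal{D}$ factors uniquely through $\iota$ via an enriched functor. By Proposition~\ref{prop:enrichments}, \pset-enrichment of $\mathcal{C}_\bot$ amounts to checking that each homset $\mathcal{C}(A,B)_\bot$ carries a pointed set structure (it does, with basepoint $\bot$) and that composition satisfies $\bot \circ f = \bot$ and $f \circ \bot = \bot$ — but these are exactly the defining equations of the composition in Definition~\ref{def:botc}, so this step is essentially immediate. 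I would also note in passing that composition is well-defined and associative: associativity splits into the case where all three morphisms lie in $\mathcal{C}$ (inherited) and the cases where at least one is $\bot$ (both sides collapse to $\bot$ by the extension rules), and identities act correctly since $1 \circ \bot = \bot = \bot \circ 1$.

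Next I would set up the universal property. Let $\mathcal{D}$ be any \pset-enriched category and $F : \mathcal{C} \to \mathcal{D}$ an ordinary functor. Enriched functoriality into $\mathcal{D}$ means the action on homsets must be a morphism of pointed sets, i.e. preserve the basepoint. So define $\bar F : \mathcal{C}_\bot \to \mathcal{D}$ to agree with $F$ on objects, and on morphisms send $f \in \mathcal{C}(A,B)$ to $Ff$ and send $\bot_{A,B}$ to the basepoint $\bot_{FA,FB}$ of $\mathcal{D}(FA,FB)$. Uniqueness is then forced: any \pset-enriched extension of $F$ must act as $F$ on the old morphisms and must send each $\bot$ to the basepoint. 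For existence I must verify $\bar F$ is a functor: on old morphisms this is just functoriality of $F$; on composites involving $\bot$, the left side evaluates (via the extension rules in $\mathcal{C}_\bot$) to $\bar F(\bot) = $ basepoint, while the right side is a composite in $\mathcal{D}$ with one factor equal to a basepoint, which equals the basepoint by \eqref{eq:botpreserve}. Hence both sides agree. Preservation of identities is inherited from $F$.

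I expect the only mild subtlety — the ``main obstacle'', such as it is — to be bookkeeping around the precise formulation of what ``free \pset-enriched category over $\mathcal{C}$'' means: one should check that the relevant comparison is between ordinary functors out of $\mathcal{C}$ and \pset-enriched functors out of $\mathcal{C}_\bot$, mediated by the underlying-ordinary-category functor applied to $\mathcal{C}_\bot$, and that this correspondence is natural in $\mathcal{D}$. Naturality is routine: given an enriched functor $G : \mathcal{D} \to \mathcal{D}'$, both $G \circ \bar F$ and $\overline{(G \circ F)}$ agree on objects, agree on old morphisms by functoriality, and agree on $\bot$ since enriched $G$ preserves basepoints. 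So the bijection $\mathrm{Fun}(\mathcal{C}, U\mathcal{D}) \cong \text{\pset-}\mathrm{Fun}(\mathcal{C}_\bot, \mathcal{D})$ is natural, which is precisely the statement that $\mathcal{C}_\bot$ is the free \pset-enriched category on $\mathcal{C}$. The whole argument is short; its value is in making explicit that the ad hoc extension rules of Definition~\ref{def:botc} are exactly what enrichment demands, nothing more.
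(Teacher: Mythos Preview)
Your argument is correct and complete: the verification that $\mathcal{C}_\bot$ is \pset-enriched via Proposition~\ref{prop:enrichments}, the case split for associativity, and the explicit construction of $\bar F$ with its uniqueness and naturality all go through exactly as you describe.

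The paper itself does not prove this proposition. It simply states the result, having earlier (in section~\ref{sec:enriched}) pointed the reader to the general construction of the free $\mathcal{V}$-enriched category over an ordinary category in \cite{Kelly2005, Borceux1994b}. In that general story one applies the free-$\mathcal{V}$-object functor $F : \cset \to \mathcal{V}$ (left adjoint to the underlying-set functor) to each homset; for $\mathcal{V} = \pset$ this free functor is precisely $(-)_\bot$, and the induced composition is exactly the extension rule in Definition~\ref{def:botc}. So the paper's implicit argument is ``this is an instance of the standard change-of-base along a monoidal left adjoint'', whereas you verify the universal property by hand. Your approach is more self-contained and makes the role of equation~\eqref{eq:botpreserve} in $\mathcal{D}$ visible; the paper's approach has the advantage that the analogous propositions for $\mathcal{C}_{\nonemptyfinitepowersetmonad}$, $\mathcal{C}_{\finitepowersetmonad}$, $\mathcal{C}_{\distmonad}$, and $\mathcal{C}_{\subdistmonad}$ follow uniformly from the same general principle without repeating the argument.
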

\begin{theorem}
  If~$\mathcal{C}$ is a compact closed category then~$\mathcal{C}_\bot$ is
  a compact closed category. The monoidal structure on morphisms extends that in~$\mathcal{C}$ as in~\eqref{eq:botmonoidal}.
  \begin{trivlist}\item
    \begin{minipage}{0.495\textwidth}
      \begin{equation}
        \label{eq:botmonoidal}
        f \otimes_\bot f' =
        \begin{cases}
          \bot \text{ if } f = \bot \text{ or } f' = \bot\\
          f \otimes f' \text{ otherwise }
        \end{cases}
      \end{equation}
    \end{minipage}
    \begin{minipage}{0.495\textwidth}
      \begin{equation}
        \label{eq:botdagger}
        f^{\dagger_\bot} =
        \begin{cases}
          \bot \text{ if } f = \bot\\
          f^\dagger \text{ otherwise }
        \end{cases}
      \end{equation}
    \end{minipage}
  \end{trivlist}
  There is an identity and surjective on objects strict monoidal embedding~$\mathcal{C} \rightarrow \mathcal{C}_\bot$.

  If~$\mathcal{C}$ is a dagger compact closed category then~$\mathcal{C}_\bot$ is a dagger compact closed category with the
  dagger extending that of~$\mathcal{C}$ as in~\eqref{eq:botdagger}.  
\end{theorem}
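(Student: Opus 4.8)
The plan is to carry all of the structure across from $\mathcal{C}$ along the evident identity-on-objects inclusion $J : \mathcal{C} \to \mathcal{C}_\bot$, $f \mapsto f$, exploiting two facts: $J$ is faithful (injective on each homset), and the new element $\bot$ is \emph{absorbing} for $\circ_\bot$ on both sides and, by the defining clause~\eqref{eq:botmonoidal}, absorbing for $\otimes_\bot$ in both arguments. Consequently every piece of structure we need --- the unit $I$, the associator, the unitors, the symmetry, and the compact-closed units and counits --- can be taken to be literally the one already present in $\mathcal{C}$, none of which is $\bot$; and every coherence, naturality, snake, or dagger-compatibility equation we have to check is then an equation between such ``old'' morphisms, so it follows from the corresponding equation in $\mathcal{C}$ by faithfulness of $J$. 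The only places where $\bot$ genuinely participates are functoriality of $\otimes_\bot$ and of the dagger, and there the equations to verify collapse to the tautology $\bot = \bot$.

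Spelling this out: $\mathcal{C}_\bot$ is already known to be a category by the preceding proposition (associativity of $\circ_\bot$ is associativity in $\mathcal{C}$ when no factor is $\bot$, and both sides equal $\bot$ otherwise). For the monoidal structure, $\otimes_\bot$ is defined on objects as in $\mathcal{C}$ and sends $(1_A, 1_B)$ to $1_{A \otimes B}$ because identities are never $\bot$; the interchange law $(g \otimes_\bot g') \circ_\bot (f \otimes_\bot f') = (g \circ_\bot f) \otimes_\bot (g' \circ_\bot f')$ is the interchange law of $\mathcal{C}$ when all four morphisms are old, while if, say, $f = \bot$ then $f \otimes_\bot f' = \bot$ makes the left-hand side $(g \otimes_\bot g') \circ_\bot \bot = \bot$ and $g \circ_\bot f = \bot$ makes the right-hand side $\bot \otimes_\bot (g' \circ_\bot f') = \bot$, with the three remaining cases symmetric. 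Taking $I, \alpha, \lambda, \rho, \sigma$ to be those of $\mathcal{C}$, each naturality square has old components, so either all transported morphisms are old (naturality in $\mathcal{C}$) or one is $\bot$ and both legs of the square reduce to $\bot$ by absorption; the pentagon, triangle and hexagon, and $\sigma^2 = 1$, are equations among old morphisms and hence hold. So $\mathcal{C}_\bot$ is symmetric monoidal and $J$ is an identity-on-objects, hence surjective-on-objects, strict monoidal embedding. Keeping the same duals $A^{*}$ and the same units $\eta_A$ and counits $\varepsilon_A$, the two triangle (snake) identities are once more equations among old morphisms and so transfer from $\mathcal{C}$; thus $\mathcal{C}_\bot$ is compact closed, with monoidal action on morphisms as in~\eqref{eq:botmonoidal}.

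For the dagger case, define $(-)^{\dagger_\bot}$ by~\eqref{eq:botdagger}; it is the identity on objects. Contravariant functoriality $(g \circ_\bot f)^{\dagger_\bot} = f^{\dagger_\bot} \circ_\bot g^{\dagger_\bot}$ and involutivity $f^{\dagger_\bot \dagger_\bot} = f$ hold for old morphisms since they hold in $\mathcal{C}$, and reduce to $\bot = \bot$ as soon as a $\bot$ appears. Compatibility with the tensor, $(f \otimes_\bot f')^{\dagger_\bot} = f^{\dagger_\bot} \otimes_\bot {f'}^{\dagger_\bot}$, is the $\mathcal{C}$-identity on old morphisms, and when either factor is $\bot$ both sides equal $\bot$ --- the left via $\bot^{\dagger_\bot} = \bot$, the right via absorption of $\bot$ in $\otimes_\bot$. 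The coherence isomorphisms $\alpha, \lambda, \rho, \sigma$ remain unitary because $\dagger_\bot$ agrees with $\dagger$ on them, and whichever dagger-compactness axiom relating $\varepsilon_A$ and $\eta_A$ one adopts is an equation among old morphisms and is inherited from $\mathcal{C}$. Hence $\mathcal{C}_\bot$ is dagger compact closed, with dagger as in~\eqref{eq:botdagger}.

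I do not anticipate a serious obstacle here: the argument is essentially the observation that all ``old'' structure and all the defining axioms transfer verbatim through the faithful inclusion $J$, together with a short finite case check that the three requirements ``$\bot$ absorbing for $\circ_\bot$'', ``$\bot$ absorbing for $\otimes_\bot$'' and ``$\bot$ fixed by $\dagger_\bot$'' are mutually consistent; the only genuine bookkeeping sits in the interchange law and in functoriality of the dagger. (One could alternatively argue abstractly that $\mathcal{C}_\bot$ is the change of base of $\mathcal{C}$ along the free functor $\cset \to \pset$, which is strong symmetric monoidal by the clause $\mu_X \otimes \mu_Y \cong \mu_{X \times Y}$ in the proposition on commutative monads, and that change of base along such a functor preserves symmetric monoidal and compact closed structure, and daggers when they are present; but the explicit route above is the one in keeping with this paper's stated philosophy.)
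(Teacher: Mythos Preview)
Your proposal is correct and follows essentially the same approach as the paper's own proof, which is only a sketch: check functoriality of the extended tensor (and dagger), inherit the associator, unitors, symmetry, cups and caps from~$\mathcal{C}$, verify their naturality against the extended morphism actions, and note that the remaining axioms are equations among ``old'' morphisms. You have simply filled in the case analysis (old versus $\bot$) that the paper leaves implicit; your parenthetical change-of-base remark is a pleasant bonus but not required.
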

\begin{proof}
  We sketch the basic ideas.
  We must check that the extended definitions of the monoidal product and if necessary the dagger
  are functorial. We then wish to inherit the associator, left and right unitor and symmetry from
  the base. In order to do so we must check they remain natural with respect to the extended
  functor actions on morphisms. Cups and caps and other structure and axioms are then broadly speaking inherited
  from~$\mathcal{C}$.
\end{proof}
Although~$\mathcal{C}_\bot$ gives a new compact closed category, it is not particularly exciting. As soon
as we compose or tensor with a~$\bot$ element, the whole term becomes~$\bot$. This is consistent with the
intuition that if we have no idea about part of the information we require, we cannot know the
whole either.

We took the opportunity to sketch the proof that $\mathcal{C}_\bot$ is compact closed as it is easiest to
follow in this simple case. Later proofs of similar claims are analogous.
\begin{remark}
Although we have inherited some good properties from the base category in the construction~\ref{def:botc}, clearly
not everything can be preserved. We are expanding the morphisms between each pair of objects, for example
in~$\standardrel_\bot$ we now have three scalars, which we can interpret as true, false and unknown. This expansion
will interfere with (co)limits from the base category. For example if we have a zero object in~$\mathcal{C}$,
that object will no longer be a zero object in~$\mathcal{C}_\bot$ as there will be 2 morphisms of type $0 \rightarrow 0$.
\end{remark}
Noting the similarity with the behaviour of the~$\bot$ elements with that of zero morphisms in categories with zero objects, we note that:
\begin{lemma}
  Every category with a zero object is \pset-enriched.
\end{lemma}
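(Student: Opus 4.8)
The plan is to reduce everything to Proposition~\ref{prop:enrichments}, which tells us that a category is \pset-enriched precisely when each homset carries a distinguished element $\bot_{A,B}$ subject to the two absorption laws of~\eqref{eq:botpreserve}, namely $\bot_{B,C}\circ f = \bot_{A,C}$ for every $f : A \to B$ and $g\circ \bot_{A,B} = \bot_{A,C}$ for every $g : B \to C$. Since a pointed set has no operations beyond its basepoint, there is no further coherence to verify: the remaining associativity and unit conditions of the enrichment are inherited verbatim from $\mathcal{C}$ (the monoidal unit of \pset being the free algebra $(\{*\},!)$, and a map out of a smash product being exactly a function that absorbs the basepoint separately in each argument). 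So the whole task becomes: exhibit a suitable distinguished element.

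First I would take the obvious candidate, the zero morphism. Writing $0$ for the zero object, it is both initial and terminal, so for all objects $A,B$ the homsets $\mathcal{C}(A,0)$ and $\mathcal{C}(0,B)$ are singletons; I define $\bot_{A,B} : A \to B$ to be the unique composite $A \to 0 \to B$. This is well defined with no choices involved, and it specializes, when $A = B$, to an element of $\mathcal{C}(A,A)$ distinct in general from $1_A$ — matching the earlier remark that adding a $\bot$ can destroy the zero object by doubling $\mathcal{C}(0,0)$.

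Then I would check the absorption laws by the uniqueness of maps into and out of $0$. Given $f : A \to B$, the composite $\bot_{B,C}\circ f$ is $(0 \to C)\circ(B \to 0)\circ f$; here $(B \to 0)\circ f$ is a morphism $A \to 0$, hence the unique such one since $0$ is terminal, so the composite collapses to $(0 \to C)\circ(A \to 0) = \bot_{A,C}$. Dually, for $g : B \to C$, the composite $g\circ\bot_{A,B} = g\circ(0 \to B)\circ(A \to 0)$ has $g\circ(0 \to B)$ a morphism $0 \to C$, hence the unique such one since $0$ is initial, giving $(0 \to C)\circ(A \to 0) = \bot_{A,C}$. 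This establishes~\eqref{eq:botpreserve}, and Proposition~\ref{prop:enrichments} finishes the argument. There is no real obstacle: the only subtlety worth flagging is the verification that Proposition~\ref{prop:enrichments} genuinely captures \pset-enrichment with nothing left to check, and one may note in passing that the construction simply recovers the standard fact that zero morphisms form a two-sided ideal closed under composition.
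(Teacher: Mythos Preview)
Your argument is correct: taking the zero morphism $A \to 0 \to B$ as the basepoint of each homset and using initiality/terminality of $0$ to verify the absorption laws~\eqref{eq:botpreserve} is exactly the standard observation, and Proposition~\ref{prop:enrichments} then packages this as \pset-enrichment. The paper states this lemma without proof, so there is nothing further to compare; your write-up simply fills in the routine details the paper leaves implicit.
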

Although the lift monad and \pset-enrichment are extremely straightforward, we shall return to them later, in interaction with different types of ambiguity.

If the previous model captured incomplete information, we now move to consider ambiguity. In this case
we intend situations where several things are possible, for example a bat is either a winged mammal or
sporting equipment. In particular, unlike the previous model, we have complete information about the available possibilities,
we are simply unaware of which one applies.
If we don't have any sense of the relative likelihoods of the possibilities, we
are just left with a non empty finite set of alternatives, and this points us in the direction
of the monad~$\nonemptyfinitepowersetmonad$.
\begin{definition}
  For category~$\mathcal{C}$ we define the category~$\mathcal{C}_{\nonemptyfinitepowersetmonad}$ as having:
  \begin{itemize}
  \item {\bf Objects}: The same objects as~$\mathcal{C}$
  \item {\bf Morphisms}: We define~$\mathcal{C}_{\nonemptyfinitepowersetmonad}(A,B) = \nonemptyfinitepowersetmonad(\mathcal{C}(A,B))$
  \end{itemize}
  We define composition as follows:
  \begin{equation*}
    V \circ U = \{ v \circ u \mid v \in V, u \in U \}
  \end{equation*}
  Identities are then given by the singletons containing the identities from~$\mathcal{C}$.
\end{definition}
\begin{proposition}
  For a category~$\mathcal{C}$, $\mathcal{C}_{\nonemptyfinitepowersetmonad}$ is the free affine join semilattice enriched category over~$\mathcal{C}$.
\end{proposition}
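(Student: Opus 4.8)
The plan is to verify the universal property of the free $\ajslat$-enriched category directly. First I would confirm that $\mathcal{C}_{\nonemptyfinitepowersetmonad}$ is a well-defined category: composition of non-empty finite sets is again non-empty and finite, and associativity and the unit laws are inherited pointwise from $\mathcal{C}$. Next I would show it is $\ajslat$-enriched. Each homset $\nonemptyfinitepowersetmonad(\mathcal{C}(A,B))$ carries the affine join semilattice structure in which union is the join, and a direct computation gives $\{v \circ u \mid v \in V \cup V', u \in U\} = (V \circ U) \cup (V' \circ U)$ and symmetrically, so composition satisfies the equations of \eqref{eq:joinpreserve}. By Proposition~\ref{prop:enrichments} this establishes $\ajslat$-enrichment, the identity at $A$ being picked out by the map $I \to \mathcal{C}_{\nonemptyfinitepowersetmonad}(A,A)$ sending $*$ to $\{1_A\}$.

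The second ingredient is the observation that $\nonemptyfinitepowersetmonad(\mathcal{C}(A,B))$ is the \emph{free} affine join semilattice on the set $\mathcal{C}(A,B)$ --- this is exactly the content of $\nonemptyfinitepowersetmonad$ being the monad presenting $\ajslat$, the singleton inclusion $\eta$ being the universal map. Since the monoidal unit $I = (\{*\},!)$ of $\ajslat$ is the free algebra on one generator, $\ajslat(I,-)$ is naturally isomorphic to the underlying-set functor, so the underlying ordinary category $(\mathcal{C}_{\nonemptyfinitepowersetmonad})_0$ has homsets the underlying sets $\nonemptyfinitepowersetmonad(\mathcal{C}(A,B))$, and the identity-on-objects assignment $J : \mathcal{C} \to (\mathcal{C}_{\nonemptyfinitepowersetmonad})_0$, $f \mapsto \{f\}$, is a functor (it preserves identities and composition since $\{g \circ f\} = \{g\} \circ \{f\}$).

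For the universal property, given an $\ajslat$-enriched category $\mathcal{D}$ and an ordinary functor $F : \mathcal{C} \to \mathcal{D}_0$, I would let $\widetilde F$ agree with $F$ on objects and, on each homset, be the unique affine join semilattice homomorphism $\nonemptyfinitepowersetmonad(\mathcal{C}(A,B)) \to \mathcal{D}(FA,FB)$ extending $f \mapsto F(f)$, explicitly $\{f_1,\dots,f_n\} \mapsto F(f_1) \vee \cdots \vee F(f_n)$; this exists and is unique by freeness. That $\widetilde F$ is an $\ajslat$-enriched functor then reduces to preservation of identities (immediate) and of composition: both $(V,U) \mapsto \widetilde F(V \circ U)$ and $(V,U) \mapsto \widetilde F(V) \circ \widetilde F(U)$ are bimorphisms of affine join semilattices --- the first because composition in $\mathcal{C}_{\nonemptyfinitepowersetmonad}$ is a bimorphism and $\widetilde F$ is a homomorphism, the second because composition in $\mathcal{D}$ is a bimorphism --- and they agree on pairs of singletons because $F$ is a functor, so they coincide by the universal bimorphism property of the tensor on $\ajslat$. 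Uniqueness of $\widetilde F$ among enriched functors lifting $F$ is forced: any such functor is $F$ on objects and a homomorphism extending $f \mapsto F(f)$ on homsets, hence equal to $\widetilde F$.

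I expect the main obstacle to be bookkeeping rather than depth: precisely matching the abstract notion of ``free $\mathcal{V}$-enriched category'' (the left adjoint to the underlying-category functor) with this concrete construction, in particular identifying $\mathcal{D}_0$'s homsets with underlying sets via $I$ being free on one generator, and being careful that the \emph{affine} (unit-free) variant of join semilattice is the one matching the \emph{non-empty} finite powerset, so that $\nonemptyfinitepowersetmonad(\mathcal{C}(A,B))$ --- which never contains an empty join --- is genuinely free. The repeated appeals to the universal bimorphism property, to promote separate-variable distributivity to honest enriched composition, are the only place the commutative-monad machinery is really used, and the argument is entirely parallel to the $\pset$ case.
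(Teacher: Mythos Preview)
Your argument is correct and complete: you verify that $\mathcal{C}_{\nonemptyfinitepowersetmonad}$ is a category, that it is $\ajslat$-enriched via Proposition~\ref{prop:enrichments}, and then you establish the universal property directly by exploiting that each homset is the free affine join semilattice on the corresponding $\mathcal{C}$-homset, with the bimorphism argument handling preservation of composition. The paper itself offers no proof of this proposition; it is stated bare, the intended justification being the general free $\mathcal{V}$-category construction alluded to at the end of Section~\ref{sec:enriched} (with reference to Kelly and Borceux), instantiated at $\mathcal{V} = \ajslat = \eilmo{\nonemptyfinitepowersetmonad}$. Your write-up is thus strictly more explicit than what the paper provides, and is exactly the unpacking of that general construction one would give if asked to spell it out.
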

\begin{theorem}
  If~$\mathcal{C}$ is a compact closed category then~$\mathcal{C}_{\nonemptyfinitepowersetmonad}$ is
  a compact closed category. The action of the tensor on morphisms extends that of~$\mathcal{C}$ as in~\eqref{eq:affinemonoidal}.
  \begin{trivlist}\item
    \begin{minipage}{0.495\textwidth}
      \begin{equation}
        \label{eq:affinemonoidal}
        U \otimes_{\nonemptyfinitepowersetmonad} U' = \{ u \otimes u' \mid u \in U, u' \in U' \}
      \end{equation}
    \end{minipage}
    \begin{minipage}{0.495\textwidth}
      \begin{equation}
        \label{eq:affinedagger}
        U^{\dagger_{\nonemptyfinitepowersetmonad}} = \{ u^\dagger \mid u \in U \}
      \end{equation}
    \end{minipage}
  \end{trivlist}
  There is an identity and surjective on objects strict monoidal embedding~$\mathcal{C} \rightarrow \mathcal{C}_{\nonemptyfinitepowersetmonad}$.

  If $\mathcal{C}$ is a dagger compact closed category then~$\mathcal{C}_{\nonemptyfinitepowersetmonad}$ is a dagger compact closed category with the
  dagger extending that of $\mathcal{C}$ as in~\eqref{eq:affinedagger}.
\end{theorem}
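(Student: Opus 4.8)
The plan is to follow exactly the template of the analogous $\mathcal{C}_\bot$ theorem, since (as the authors note) the proof is "broadly speaking inherited from $\mathcal{C}$." First I would verify that $\otimes_{\nonemptyfinitepowersetmonad}$ as defined in~\eqref{eq:affinemonoidal} is functorial: on identities it sends the pair of singletons $\{1_A\},\{1_B\}$ to $\{1_A\otimes 1_B\}=\{1_{A\otimes B}\}$, and on composites it respects the interchange law because $\{(v\circ u)\otimes(v'\circ u')\}$ equals $\{(v\otimes v')\circ(u\otimes u')\}$ elementwise, using the interchange law already holding in $\mathcal{C}$. The same elementwise argument, plus involutivity of the dagger in $\mathcal{C}$, shows that $(-)^{\dagger_{\nonemptyfinitepowersetmonad}}$ of~\eqref{eq:affinedagger} is a functor, that it is strictly involutive, and that it is monoidal in the required sense. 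Note that nonemptiness is essential here: were empty sets allowed, the "set of composites" operation would not interact correctly, and identities would fail to be two-sided for the empty homset-element; restricting to $\nonemptyfinitepowersetmonad$ is precisely what makes composition and tensor well-behaved.

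Next I would inherit the coherence data. The associator, unitors and symmetry of $\mathcal{C}_{\nonemptyfinitepowersetmonad}$ are taken to be the singletons $\{\alpha\}$, $\{\lambda\}$, $\{\rho\}$, $\{\sigma\}$ of the corresponding isomorphisms in $\mathcal{C}$; these are still isomorphisms since $\{\alpha\}\circ\{\alpha^{-1}\}=\{1\}$. Naturality of, say, the associator with respect to the extended tensor action is the statement that for all $U,U',U''$ one has $\{\alpha\}\circ\big((U\otimes_{\nonemptyfinitepowersetmonad} U')\otimes_{\nonemptyfinitepowersetmonad} U''\big) = \big(U\otimes_{\nonemptyfinitepowersetmonad}(U'\otimes_{\nonemptyfinitepowersetmonad} U'')\big)\circ\{\alpha\}$, which is again checked elementwise from naturality in $\mathcal{C}$; similarly for the other coherence isomorphisms. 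The pentagon and triangle axioms, being equations between singletons of morphisms that already satisfy them in $\mathcal{C}$, hold automatically. For the compact closed structure I would take the unit and counit of each dual $A\dashv A^*$ to be the singletons $\{\eta_A\}$ and $\{\epsilon_A\}$; the two snake/zig-zag identities are then equations of singletons inherited from $\mathcal{C}$, and in the dagger case the compatibility $\epsilon_A^{\dagger_{\nonemptyfinitepowersetmonad}} = \{\epsilon_A^\dagger\}$ matching the standard cap is immediate from~\eqref{eq:affinedagger}.

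Finally, the embedding $\mathcal{C}\to\mathcal{C}_{\nonemptyfinitepowersetmonad}$ sending $f\mapsto\{f\}$ is identity on objects, faithful, injective on objects hence trivially surjective-on-objects in the stated sense, and strictly monoidal because $\{f\otimes f'\}=\{f\}\otimes_{\nonemptyfinitepowersetmonad}\{f'\}$ and it preserves all the chosen coherence and compact-closed data on the nose; it is a dagger functor by~\eqref{eq:affinedagger}. The one genuinely non-formal point — the step I expect to be the main obstacle — is confirming that \emph{every} required equation really does reduce to an elementwise equation in $\mathcal{C}$, i.e. that the pointwise-image operations $\{-\}\circ(\text{ops of }\mathcal{C})$ never introduce spurious coincidences or omissions; this is exactly guaranteed by the universal/free description in the preceding proposition (that $\mathcal{C}_{\nonemptyfinitepowersetmonad}$ is the free affine-join-semilattice-enriched category over $\mathcal{C}$, with composition and tensor the unique bimorphism extensions), together with the fact that $\nonemptyfinitepowersetmonad$ is a commutative monad so that $\otimes_{\nonemptyfinitepowersetmonad}$ is genuinely a bifunctor on the enriched category. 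Given that, the remainder is the routine diagram-chase sketched above and I would simply state it as analogous to the $\mathcal{C}_\bot$ case.
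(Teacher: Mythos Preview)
Your proposal is correct and follows essentially the same approach as the paper, which does not give a separate proof for this theorem at all but simply declares it analogous to the $\mathcal{C}_\bot$ case; you have merely fleshed out that analogy with the expected elementwise verifications.

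One small correction to a side remark: your claim that nonemptiness is ``essential'' for composition and identities to behave is wrong. The very next theorem in the paper establishes the identical result for $\mathcal{C}_{\finitepowersetmonad}$, where the empty set \emph{is} a morphism, and everything still works: $\emptyset$ simply behaves as an absorbing element under both composition and tensor, exactly like $\bot$ in $\mathcal{C}_\bot$. Nonemptiness is what distinguishes affine-join-semilattice enrichment from join-semilattice enrichment, not what makes the categorical axioms go through. You should drop that sentence.
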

As a final possibility, what if we wish to consider both ambiguous and incomplete information? It
is then natural to consider the finite powerset monad as the source of our enrichment. The definition
is almost identical to that of $\mathcal{C}_{\nonemptyfinitepowersetmonad}$:
\begin{definition}
  For category~$\mathcal{C}$ we define the category~$\mathcal{C}_{\finitepowersetmonad}$ as having:
  \begin{itemize}
  \item {\bf Objects}: The same objects as~$\mathcal{C}$
  \item {\bf Morphisms}: We define~$\mathcal{C}_{\finitepowersetmonad}(A,B) = \finitepowersetmonad(\mathcal{C}(A,B))$
  \end{itemize}
  We define composition as follows:
  \begin{equation*}
    V \circ U = \{ v \circ u \mid v \in V, u \in U \}
  \end{equation*}
  Identities are then given by the singletons containing the identities from~$\mathcal{C}$.
\end{definition}
\begin{proposition}
  For a category~$\mathcal{C}$, $\mathcal{C}_{\finitepowersetmonad}$ is the free join semilattice enriched category over~$\mathcal{C}$.
\end{proposition}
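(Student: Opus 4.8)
The plan is to follow the template already used for $\mathcal{C}_\bot$ and $\mathcal{C}_{\nonemptyfinitepowersetmonad}$: first exhibit $\mathcal{C}_{\finitepowersetmonad}$ concretely as a $\jslat$-enriched category, and then check that it has the universal property of the free such category over $\mathcal{C}$. Since $\finitepowersetmonad$ is commutative, the proposition on Eilenberg--Moore categories of commutative monads tells us $\eilmo{\finitepowersetmonad}\simeq\jslat$ is symmetric monoidal closed, complete and cocomplete, with universal bimorphisms for its tensor and monoidal unit the free algebra on a one-element set; this is the only infrastructure needed.

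First I would record that the hom-object $\mathcal{C}_{\finitepowersetmonad}(A,B)=\finitepowersetmonad(\mathcal{C}(A,B))$ carries its free $\finitepowersetmonad$-algebra structure, i.e. the join semilattice of finite subsets of $\mathcal{C}(A,B)$ ordered by inclusion, with union as join and $\emptyset$ as bottom. Next, the prescribed composition $V\circ U=\{v\circ u\mid v\in V,\ u\in U\}$ is, in each argument separately, a $\jslat$-homomorphism: $(V_1\cup V_2)\circ U=(V_1\circ U)\cup(V_2\circ U)$, $\emptyset\circ U=\emptyset$, and symmetrically in $U$ — all immediate from the set-builder definition. By the universal bimorphism property of the tensor on $\eilmo{\finitepowersetmonad}$ this bimorphism factors through a genuine $\jslat$-morphism $\mathcal{C}_{\finitepowersetmonad}(B,C)\otimes\mathcal{C}_{\finitepowersetmonad}(A,B)\to\mathcal{C}_{\finitepowersetmonad}(A,C)$, which is our enriched composition. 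The enriched identities are the morphisms out of the monoidal unit naming the singletons $\{\mathrm{id}_A\}$. Associativity and unitality of enriched composition then reduce, using the universal property of the free algebras and the fact that the coherence isomorphisms of $\eilmo{\finitepowersetmonad}$ are inherited from $\cset$, to the pointwise identities $(W\circ V)\circ U=\{w\circ v\circ u\}=W\circ(V\circ U)$ and $\{\mathrm{id}\}\circ U=U=U\circ\{\mathrm{id}\}$, which hold because they hold in $\mathcal{C}$. This makes $\mathcal{C}_{\finitepowersetmonad}$ a $\jslat$-enriched category, with an identity-on-objects embedding $\mathcal{C}\hookrightarrow\mathcal{C}_{\finitepowersetmonad}$ sending $f$ to $\{f\}$, serving as the unit of the adjunction.

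For the universal property I would verify the adjunction directly. Given a $\jslat$-enriched category $\mathcal{D}$ and an ordinary functor $G$ from $\mathcal{C}$ to the underlying ordinary category of $\mathcal{D}$ (whose hom-sets are the underlying sets of the hom-semilattices), define $\tilde G$ to agree with $G$ on objects and, on hom-objects, to be the unique $\jslat$-homomorphism $\finitepowersetmonad(\mathcal{C}(A,B))\to\mathcal{D}(GA,GB)$ extending $f\mapsto Gf$, so explicitly $\tilde G(\{f_1,\dots,f_n\})=Gf_1\vee\dots\vee Gf_n$ and $\tilde G(\emptyset)=\bot$. Uniqueness is forced: any enriched functor restricting to $G$ must send each generating singleton $\{f\}$ to $Gf$ and preserve joins and bottom. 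Existence amounts to checking enriched functoriality, $\tilde G(V\circ U)=\tilde G V\circ\tilde G U$ and $\tilde G(\{\mathrm{id}\})=\mathrm{id}$; since the tensor is generated by the elements $\{v\}\otimes\{u\}$ and all maps involved are homomorphisms, it suffices to test on these, where both sides equal $G(v\circ u)=Gv\circ Gu$ by functoriality of $G$, the remaining (possibly empty) joins being handled by Proposition~\ref{prop:enrichments}, which says composition in $\mathcal{D}$ distributes over joins and annihilates $\bot$ in each argument. Naturality of the resulting bijection in $\mathcal{C}$ and $\mathcal{D}$ is then routine.

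I do not expect a genuine obstacle here: the argument is bookkeeping once the bimorphism machinery of Section~\ref{sec:monads} is available, and it runs in exact parallel with the $\mathcal{C}_\bot$ and $\mathcal{C}_{\nonemptyfinitepowersetmonad}$ cases, $\finitepowersetmonad$ being precisely the monad that allows both the empty join (absent information) and nonempty joins (genuine ambiguity). The one point deserving care is keeping the $\jslat$-enriched composition — a morphism out of a tensor — distinct from its underlying two-variable set function, and it is exactly the universal bimorphism property that legitimises passing between them; this is the same step on which the analogous earlier propositions rest.
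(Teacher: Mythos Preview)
Your proposal is correct. The paper itself does not give a proof of this proposition: it is stated bare, with the general existence of free $\mathcal{V}$-enriched categories over ordinary categories deferred to Kelly and Borceux in Section~\ref{sec:enriched}. What you have written is a concrete unpacking of that general construction in the specific case $\mathcal{V}=\jslat\simeq\eilmo{\finitepowersetmonad}$, and it is sound: the hom-objects are the free algebras $\finitepowersetmonad(\mathcal{C}(A,B))$, composition is the bimorphism induced by composition in $\mathcal{C}$ via the isomorphism $\mu_X\otimes\mu_Y\cong\mu_{X\times Y}$ recorded in the paper's proposition on Eilenberg--Moore categories, and the universal property is exactly the free--forgetful adjunction applied homwise together with Proposition~\ref{prop:enrichments} to propagate functoriality through joins and $\bot$.

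So there is no discrepancy to flag; your argument is precisely the content the paper leaves implicit, and is the same argument one would give for the parallel propositions on $\mathcal{C}_\bot$, $\mathcal{C}_{\nonemptyfinitepowersetmonad}$, $\mathcal{C}_\distmonad$ and $\mathcal{C}_\subdistmonad$, none of which are proved in the paper either.
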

\begin{theorem}
  If~$\mathcal{C}$ is a compact closed category then~$\mathcal{C}_{\finitepowersetmonad}$ is
  a compact closed category. The action of the tensor on morphisms extends that of~$\mathcal{C}$ as in~\eqref{eq:jslatmonoidal}.
  \begin{trivlist}\item
    \begin{minipage}{0.495\textwidth}
      \begin{equation}
        \label{eq:jslatmonoidal}
        U \otimes_{\finitepowersetmonad} U' = \{ u \otimes u' \mid u \in U, u' \in U' \}
      \end{equation}
    \end{minipage}
    \begin{minipage}{0.495\textwidth}
      \begin{equation}
        \label{eq:jslatdagger}
        U^{\dagger_{\finitepowersetmonad}} = \{ u^\dagger \mid u \in U \}
      \end{equation}
    \end{minipage}
  \end{trivlist}
  There is an identity and surjective on objects strict monoidal embedding~$\mathcal{C} \rightarrow \mathcal{C}_{\finitepowersetmonad}$.

  If $\mathcal{C}$ is a dagger compact closed category then~$\mathcal{C}_{\finitepowersetmonad}$ is a dagger compact closed category with the
  dagger extending that of $\mathcal{C}$ as in~\eqref{eq:jslatdagger}.
\end{theorem}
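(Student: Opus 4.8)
The plan is to follow, verbatim in structure, the proof sketched above for $\mathcal{C}_\bot$, since the two arguments are the same up to bookkeeping. Write $J : \mathcal{C} \to \mathcal{C}_{\finitepowersetmonad}$ for the map that is the identity on objects and sends $f$ to the singleton $\{f\}$; this will be the claimed embedding, and the monoidal unit $I$ is left unchanged. The first task is to check that the extended tensor of~\eqref{eq:jslatmonoidal}, and in the dagger case the extended dagger of~\eqref{eq:jslatdagger}, are functorial. Preservation of identities is immediate from the fact that identities in $\mathcal{C}_{\finitepowersetmonad}$ are singletons, and preservation of composition reduces to the pointwise interchange computation
\begin{equation*}
(U \otimes_{\finitepowersetmonad} U') \circ (V \otimes_{\finitepowersetmonad} V') = \{(u \otimes u') \circ (v \otimes v') \mid u \in U,\, u' \in U',\, v \in V,\, v' \in V'\} = (U \circ V) \otimes_{\finitepowersetmonad} (U' \circ V'),
\end{equation*}
where the middle equality is the interchange law of $\mathcal{C}$ applied elementwise; the case where one of the four sets is empty is handled uniformly, both sides then being $\emptyset$. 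The dagger is treated the same way, using $(v \circ u)^\dagger = u^\dagger \circ v^\dagger$ and $(u^\dagger)^\dagger = u$ elementwise, together with the elementwise identity $(U \otimes_{\finitepowersetmonad} U')^{\dagger_{\finitepowersetmonad}} = (U)^{\dagger_{\finitepowersetmonad}} \otimes_{\finitepowersetmonad} (U')^{\dagger_{\finitepowersetmonad}}$.

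Next, take the associator, unitors and symmetry of $\mathcal{C}_{\finitepowersetmonad}$ to be the singletons of the corresponding structural isomorphisms of $\mathcal{C}$, and check that each remains natural with respect to the extended tensor action. For instance, for the associator one unfolds both $\{\alpha\} \circ ((U \otimes_{\finitepowersetmonad} U') \otimes_{\finitepowersetmonad} U'')$ and $(U \otimes_{\finitepowersetmonad} (U' \otimes_{\finitepowersetmonad} U'')) \circ \{\alpha\}$ to the same set of composites using naturality of $\alpha$ in $\mathcal{C}$, with the empty-set cases again trivial. Once naturality is in place, all the coherence axioms (pentagon, triangle), and in the compact closed case the cups and caps together with the snake equations — and in the dagger compact case the dagger-compactness axiom — are equations between singleton morphisms, hence hold in $\mathcal{C}_{\finitepowersetmonad}$ exactly because they hold in $\mathcal{C}$.

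Finally, $J$ is a functor by the singleton identity and composition laws, is identity on objects and therefore surjective on objects, and is strict monoidal since $J(f \otimes g) = \{f \otimes g\} = \{f\} \otimes_{\finitepowersetmonad} \{g\}$ and $J$ carries each structural isomorphism of $\mathcal{C}$ to the chosen one of $\mathcal{C}_{\finitepowersetmonad}$; in the dagger case it preserves the dagger and the compact structure for the same reason. The only point that requires genuine care is the functoriality of $\otimes_{\finitepowersetmonad}$ above — ensuring the set-theoretic bookkeeping of the elementwise interchange law is carried out correctly, and in particular that $\emptyset$, which here behaves as an absorbing $\bot$-like scalar, is treated uniformly alongside the nonempty case. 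Everything else is inherited from $\mathcal{C}$ essentially by definition, exactly as in the $\mathcal{C}_\bot$ argument.
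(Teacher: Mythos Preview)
Your proposal is correct and follows essentially the same approach as the paper: the paper gives no explicit proof for this theorem, instead remarking after the $\mathcal{C}_\bot$ case that ``Later proofs of similar claims are analogous,'' and your argument is precisely the analogous elementwise verification it has in mind. Your additional explicit treatment of the empty set as an absorbing element and the pointwise interchange computation are exactly the bookkeeping the paper elides.
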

\begin{definition}
  We consider two sub-classes of monad:
  \begin{itemize}
  \item A monad is said to be \define{affine} \cite{Lindner1979, Jacobs1994} if the component of its unit at the terminal object is an isomorphism.
  \item A monad is said to be \define{relevant} \cite{Jacobs1994} if~$\doublestrength \circ \delta = T \delta$
  \end{itemize}
  Techniques for extracting the affine and relevant parts of a commutative monad can be found in~\cite{Jacobs1994}.
\end{definition}
\begin{remark}
  Again we can think about the notion of affine monad in terms of presentations, as we did with commutativity.
  An operation is said to be idempotent if:
  \begin{equation*}
    \psi(x,x,...,x) = x
  \end{equation*}
  An algebraic theory is affine if all its operations are idempotent. In particular this means the theory can
  have no constants or non-trivial unary operations. It makes intuitive sense that descriptions
  of ambiguity should lead to affine algebraic theories. We cannot just conjure up elements out of thin air,
  and being confused between~$x$ and~$x$ should provide the same information as knowing~$x$ directly.
\end{remark}
We now note a fundamental relationship between the three monads considered in this section.
\begin{remark}
As observed in~\cite{Jacobs1994}, $\nonemptyfinitepowersetmonad$ and the lift monad are the affine and relevant
parts of the finite powerset monad. In fact the finite powerset monad can be constructed from the non-empty
finite powerset monad and the lift monad using a distributive law~\cite{Beck1969}, and so in a mathematical
sense it is precisely the description of incomplete information combined with non-quantitative ambiguity.
A similar pattern will be repeated in the next section.
\end{remark}

\section{Quantified Mixing}
\label{sec:quantified}
We now move to the setting that has typically been considered in categorical models of mixing and ambiguity
until now, probabilistic mixtures. Here we return to the situation where our state of knowledge is for example
that the word ``bank'' suggests with~$90\%$ confidence a financial bank and~$10\%$ confidence a river bank.
We now have quantitative information, and it should be possible to encode this information in our homsets.
\begin{definition}
  For category~$\mathcal{C}$ we define the category~$\mathcal{C}_\distmonad$ as having:
  \begin{itemize}
  \item {\bf Objects}: The same objects as $\mathcal{C}$
  \item {\bf Morphisms}: We define~$\mathcal{C}_\distmonad(A,B) = \distmonad(\mathcal{C}(A,B))$
  \end{itemize}
  Composition is given as follows:
  \begin{equation*}
    \sum_j q_j \ket{ g_j } \circ \sum_i p_i \ket{ f_i } = \sum_{i,j} p_i q_j \ket{ g_j \circ f_i }
  \end{equation*}
\end{definition}
\begin{example}
  Describing mixing in \cpm{\standardrel}, as discussed in the introduction, was unsatisfactory as we could
  not encode quantitative data about our state of knowledge. The category~$\standardrel_\distmonad$
  encodes a convex set of weights on its morphisms. For example the scalars in~$\standardrel_\distmonad$
  correspond to the closed real interval~$[0,1]$.
\end{example}
\begin{proposition}
  For category~$\mathcal{C}$, $\mathcal{C}_\distmonad$ is the free convex algebra enriched category over~$\mathcal{C}$.
\end{proposition}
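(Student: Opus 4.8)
The plan is to exhibit $\mathcal{C}_\distmonad$ as the free $\convex$-enriched category over $\mathcal{C}$ by appealing to the characterization of such free enrichments in terms of the homwise application of the monad $\distmonad$, together with the concrete enrichment conditions recorded in Proposition~\ref{prop:enrichments}. Recall that for a cocomplete symmetric monoidal closed base $\mathcal{V}$, the free $\mathcal{V}$-category $F\mathcal{C}$ on an ordinary category $\mathcal{C}$ has the same objects as $\mathcal{C}$ and homobjects $F\mathcal{C}(A,B)$ the free $\mathcal{V}$-object on the set $\mathcal{C}(A,B)$, with composition induced by the universal property of the free objects and the monoidal structure of $\mathcal{V}$. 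Since $\convex = \eilmo{\distmonad}$, the free convex algebra on a set $S$ is exactly $\distmonad(S)$ with its canonical algebra structure $\mu_S$; so the underlying data of $F\mathcal{C}$ coincides with $\mathcal{C}_\distmonad$ on objects and homsets. What must be checked is that the composition defined in $\mathcal{C}_\distmonad$ — the bilinear-in-convex-combinations extension of composition in $\mathcal{C}$ — is precisely the composition forced by freeness.

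First I would verify that $\mathcal{C}_\distmonad$ is genuinely a category: associativity and unit laws follow by a routine computation, expanding formal sums and using that composition in $\mathcal{C}$ is associative and unital, together with the fact that the reindexing $\sum_{i,j}p_iq_j\ket{g_j\circ f_i}$ is well defined in $\distmonad(\mathcal{C}(A,C))$ (finite support is preserved, weights sum to $1$). Next I would confirm that $\mathcal{C}_\distmonad$ is $\convex$-enriched in the sense of Proposition~\ref{prop:enrichments}: each homset $\distmonad(\mathcal{C}(A,B))$ carries its free convex algebra structure, and the defining formula for composition makes equations~\eqref{eq:convexpreserve} hold, since $\bigl(\sum_i p_i\ket{f_i}\bigr)\circ \ket{g} = \sum_i p_i\ket{f_i\circ g}$ and symmetrically on the other side — this is immediate from the definition restricted to a point mass in one argument, and the general bilinearity then follows. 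The embedding $\mathcal{C}\to\mathcal{C}_\distmonad$ sends $f$ to $\ket{f}=\delta_f$ via the unit $\eta$, and is clearly identity-on-objects and faithful.

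For the universal property, suppose $\mathcal{D}$ is any $\convex$-enriched category and $G\colon \mathcal{C}\to U_*\mathcal{D}$ is an ordinary functor into its underlying category. I would define $\bar G\colon \mathcal{C}_\distmonad\to\mathcal{D}$ to agree with $G$ on objects, and on a homset to be the unique convex-algebra homomorphism $\distmonad(\mathcal{C}(A,B))\to\mathcal{D}(GA,GB)$ extending $f\mapsto Gf$ — this exists and is unique precisely because $\distmonad(\mathcal{C}(A,B))$ is the free convex algebra on $\mathcal{C}(A,B)$ and $\mathcal{D}(GA,GB)$ is a convex algebra. Functoriality of $\bar G$, i.e. $\bar G(\psi\circ\phi)=\bar G\psi\circ\bar G\phi$, is checked by noting both sides are convex in each of $\phi,\psi$ separately (using that composition in $\mathcal{D}$ satisfies~\eqref{eq:convexpreserve}), so it suffices to check equality on point masses $\phi=\ket f$, $\psi=\ket g$, where it reduces to $G(g\circ f)=Gg\circ Gf$. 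Uniqueness of $\bar G$ among $\convex$-functors extending $G$ is forced again by freeness of the homsets. The main obstacle is bookkeeping rather than conceptual: one must be careful that ``convex in each argument separately'' genuinely pins down a homomorphism out of $\distmonad(\mathcal{C}(A,B))$ — this is exactly the universal bimorphism property of the tensor in $\convex$ invoked in the preceding proposition, so the cleanest route is to phrase composition in $\mathcal{C}_\distmonad$ as a map out of $\distmonad(\mathcal{C}(B,C))\otimes\distmonad(\mathcal{C}(A,B))$ and let that universal property do the work, avoiding ad hoc manipulation of formal sums.
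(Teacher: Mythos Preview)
Your proposal is correct. The paper itself gives no proof of this proposition: it simply states the result, having earlier (in Section~\ref{sec:enriched}) pointed to the general construction of the free $\mathcal{V}$-enriched category over an ordinary category and deferred details to the standard references of Kelly and Borceux. Your argument is precisely the unpacking of that construction in the case $\mathcal{V}=\convex=\eilmo{\distmonad}$: hom-objects are free algebras $\distmonad(\mathcal{C}(A,B))$, composition is the bimorphism extension of composition in $\mathcal{C}$, and the universal property follows from freeness of each hom-object together with the bilinearity of composition in $\mathcal{D}$ guaranteed by Proposition~\ref{prop:enrichments}. Your closing observation---that phrasing composition as a map out of the tensor and invoking the universal bimorphism property is cleaner than manipulating formal sums---is exactly in the spirit of the paper's setup, which emphasises that property of the Eilenberg--Moore tensor for precisely this reason. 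There is nothing to correct; you have supplied the argument the paper leaves implicit.
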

\begin{theorem}
  If~$\mathcal{C}$ is a compact closed category then~$\mathcal{C}_\distmonad$ is
  a compact closed category. The action of the monoidal structure on morphisms
  extends that of~$\mathcal{C}$ as in~\eqref{eq:distmonoidal}.
  \begin{trivlist}\item
    \begin{minipage}{0.495\textwidth}
      \begin{equation}
        \label{eq:distmonoidal}
        \sum_i p_i \ket{f_i} \otimes_\distmonad \sum_j q_j \ket{ g_j } = \sum_{i,j} p_i q_j \ket{ f_i \otimes g_j }
      \end{equation}
    \end{minipage}
    \begin{minipage}{0.495\textwidth}
      \begin{equation}
        \label{eq:distdagger}
        (\sum_i p_i \ket{ f_i })^{\dagger_{\distmonad}} = \sum_i p_i \ket{ f^\dagger }
      \end{equation}
    \end{minipage}
  \end{trivlist}
  There is an identity and surjective on objects strict monoidal embedding~$\mathcal{C} \rightarrow \mathcal{C}_\distmonad$.

  If $\mathcal{C}$ is a dagger compact closed category then~$\mathcal{C}_{\distmonad}$ is a dagger compact closed category with the
  dagger extending that of $\mathcal{C}$ as in~\eqref{eq:distdagger}.
\end{theorem}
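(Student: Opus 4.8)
The plan is to mimic the sketched proof of the corresponding theorem for $\mathcal{C}_\bot$, following the remark that ``later proofs of similar claims are analogous.'' First I would verify that the proposed tensor action $\otimes_\distmonad$ from \eqref{eq:distmonoidal} defines a functor $\mathcal{C}_\distmonad \times \mathcal{C}_\distmonad \to \mathcal{C}_\distmonad$. On objects it agrees with the tensor in $\mathcal{C}$; on morphisms one must check it preserves identities (which it does, since $\ket{1} \otimes_\distmonad \ket{1} = \ket{1 \otimes 1} = \ket{1}$) and preserves composition. The latter is the first genuine computation: expanding $(\sum q_j\ket{g_j} \otimes_\distmonad \sum q'_l\ket{g'_l}) \circ (\sum p_i\ket{f_i} \otimes_\distmonad \sum p'_k\ket{f'_k})$ via the composition and tensor formulas and comparing with $(\sum q_j\ket{g_j}\circ\sum p_i\ket{f_i}) \otimes_\distmonad (\sum q'_l\ket{g'_l}\circ\sum p'_k\ket{f'_k})$, both sides reduce to $\sum_{i,j,k,l} p_i q_j p'_k q'_l \ket{(g_j\circ f_i)\otimes(g'_l\circ f'_k)}$ using bifunctoriality of $\otimes$ in $\mathcal{C}$. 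A subtle point here is that the ket sums are \emph{formal}: distinct index tuples may name the same morphism of $\mathcal{C}$, so one is really checking equality of finite distributions (functions to $[0,1]$), and the manipulations above are exactly the ones underlying the multiplication of $\distmonad$ — this is where commutativity of $\distmonad$ (established in the earlier lemma) is doing quiet work.

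Next I would transport the coherence isomorphisms. The associator $\alpha$, unitors $\lambda,\rho$, and symmetry $\sigma$ of $\mathcal{C}$ are morphisms of $\mathcal{C}$, hence (via the embedding $\mathcal{C}\to\mathcal{C}_\distmonad$, i.e. as Dirac deltas $\ket{\alpha}$ etc.) morphisms of $\mathcal{C}_\distmonad$; the monoidal unit is the same object $I$. The pentagon, triangle, and hexagon axioms hold automatically since they are equations among Dirac-delta morphisms and the embedding is a strict monoidal functor on the image of $\mathcal{C}$. What requires checking is \emph{naturality} of these transformations with respect to the \emph{extended} morphisms. For $\alpha$ this means $\ket{\alpha_{A',B',C'}} \circ (f \otimes_\distmonad g \otimes_\distmonad h) = (f \otimes_\distmonad (g \otimes_\distmonad h)) \circ \ket{\alpha_{A,B,C}}$ for arbitrary $f = \sum p_i\ket{f_i}$ etc. Expanding both sides, each becomes $\sum_{i,j,k} p_i q_j r_k \ket{\alpha \circ (f_i \otimes g_j \otimes h_k)}$ versus $\sum_{i,j,k} p_i q_j r_k \ket{(f_i \otimes (g_j \otimes h_k)) \circ \alpha}$, and these agree term-by-term by naturality of $\alpha$ in $\mathcal{C}$. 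The same linearity-of-composition argument handles $\lambda$, $\rho$, $\sigma$; the key fact enabling it is the \convex-enrichment of $\mathcal{C}_\distmonad$ (composition distributes over convex combinations on both sides), which the preceding proposition already records.

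For the compact closed structure, I would take the cups $\eta_A : I \to A^* \otimes A$ and caps $\epsilon_A : A \otimes A^* \to I$ of $\mathcal{C}$ as the Dirac-delta morphisms $\ket{\eta_A}$, $\ket{\epsilon_A}$ in $\mathcal{C}_\distmonad$, with the same dual objects $A^*$. The snake (triangle/yanking) equations are equations among Dirac-delta morphisms, so they transfer directly — no new computation, since $\ket{g}\circ\ket{f} = \ket{g\circ f}$ and $\ket{g}\otimes_\distmonad\ket{f} = \ket{g\otimes f}$ make the embedding strict monoidal, hence it preserves the compact closed structure. The identity-on-objects, surjective-on-objects strict monoidal embedding $\mathcal{C}\to\mathcal{C}_\distmonad$ is then $f \mapsto \ket{f}$; functoriality and strict monoidality are precisely the identities just used. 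Finally, in the dagger case, one checks that $(-)^{\dagger_\distmonad}$ from \eqref{eq:distdagger}, namely $\sum_i p_i\ket{f_i} \mapsto \sum_i p_i\ket{f_i^\dagger}$, is a contravariant involutive identity-on-objects functor: involutivity is $(f_i^\dagger)^\dagger = f_i$ applied termwise, contravariance $(\sum q_j\ket{g_j}\circ\sum p_i\ket{f_i})^{\dagger_\distmonad} = \sum_{i,j} p_i q_j\ket{(g_j\circ f_i)^\dagger} = \sum_{i,j} p_i q_j\ket{f_i^\dagger\circ g_j^\dagger}$ uses $(g\circ f)^\dagger = f^\dagger\circ g^\dagger$ in $\mathcal{C}$ termwise, and compatibility with $\otimes_\distmonad$ and with the compact structure (the $\dagger$-compact axiom relating $\eta$, $\epsilon$ and $\dagger$) transfers because it holds on Diracs.

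\textbf{The main obstacle} is not any single deep step but the bookkeeping hazard running through all of them: the ket notation is formal, so every ``term-by-term'' comparison is secretly an identity of $\distmonad$-distributions, and one must be careful that re-indexing sums of sums is legitimate — which is exactly the content of $\distmonad$ being a commutative monad and of the universal-bimorphism description of $\otimes$ on $\eilmo{\distmonad} = \convex$. Concretely, the one place where I would slow down is verifying functoriality of $\otimes_\distmonad$ (preservation of composition), because there the four-fold sum must be resummed in two different orders and shown to yield the same distribution; everything else is this same move applied to a naturality square or an involution/contravariance identity. Once that pattern is in hand, the rest is the promised routine analogy with the $\mathcal{C}_\bot$ case.
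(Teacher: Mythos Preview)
Your proposal is correct and follows exactly the approach the paper intends: the paper gives no separate proof for this theorem, instead sketching the argument once for~$\mathcal{C}_\bot$ (check functoriality of the extended tensor and dagger, verify naturality of the inherited coherence isomorphisms against the extended morphisms, and inherit cups, caps, and axioms from~$\mathcal{C}$) and remarking that ``later proofs of similar claims are analogous.'' Your expansion of the details, including the careful observation that the termwise manipulations are really identities of~$\distmonad$-distributions underwritten by commutativity of the monad, is precisely the content the paper leaves implicit.
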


\begin{proposition}
  For the finite subdistribution monad we have \footnote{possibly these observations are well known or folklore, but I am unaware of a suitable prior reference}:
  \begin{itemize}
  \item The finite distribution monad is the affine part of the subdistribution monad.
  \item The lift monad is the relevant part of the subdistribution monad.
  \item The finite subdistribution monad can be constructed using a distributive law combining the finite distribution monad and the lift monad.
  \end{itemize}
\end{proposition}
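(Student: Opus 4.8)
The plan is to handle the three clauses in turn, using the affine- and relevant-part constructions of~\cite{Jacobs1994} for the first two, and exhibiting the distributive law explicitly for the third, in direct analogy with the finite powerset statement of the previous section.

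For the first clause, recall from~\cite{Jacobs1994} that the affine part of a commutative \cset-monad $T$ is its largest affine submonad, computed as the pullback of $T(!_X) \colon TX \to T1$ along $\eta_1 \colon 1 \to T1$; concretely $T_a X = \{ t \in TX \mid T(!_X)(t) = \eta_1(*) \}$. Now $\subdistmonad 1 \cong [0,1]$ via $d \mapsto d(*)$, with $\eta_1(*)$ corresponding to $1 \in [0,1]$, and $\subdistmonad(!_X)(d)(*) = \sum_x d(x)$. So the pullback is exactly $\{ d \in \subdistmonad X \mid \sum_x d(x) = 1 \} = \distmonad X$, with unit and multiplication inherited from $\subdistmonad$, and these are precisely those of $\distmonad$. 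Since $\distmonad 1$ is a singleton, $\distmonad$ is affine, so $\distmonad$ is the affine part of $\subdistmonad$. I expect this step to be routine.

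For the second clause, recall from~\cite{Jacobs1994} that the relevant part of a commutative monad is its largest relevant submonad, carried by $T_r X = \{ t \in TX \mid \doublestrength_{X,X}(\delta_{TX}(t)) = T(\delta_X)(t) \}$, where $\delta$ denotes the diagonal, so $\delta_{\subdistmonad X}(d) = (d,d)$. For $\subdistmonad$, $\doublestrength_{X,X}(d,d)$ is the product subdistribution $(x,y) \mapsto d(x)d(y)$, while $\subdistmonad(\delta_X)(d)$ is supported on the diagonal with value $d(x)$ at $(x,x)$. Equating the two forces $d(x)d(y) = 0$ for $x \neq y$ and $d(x)^2 = d(x)$ for every $x$, so $d$ is either the zero subdistribution or a point mass. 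Hence $\subdistmonad_r X \cong X + \{\bot\}$, identifying the zero subdistribution with $\bot$ and the point mass at $x$ with $x$; a direct check confirms that the inherited unit and multiplication are those of the lift monad, and that the lift monad is itself relevant. Therefore the lift monad is the relevant part of $\subdistmonad$.

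For the third clause, the organising observation is the natural bijection $\distmonad(X + \{\bot\}) \cong \subdistmonad X$: restrict a distribution on $X + \{\bot\}$ to $X$ to obtain a subdistribution, and conversely send a subdistribution of total mass $m$ to the distribution that assigns the remaining mass $1-m$ to $\bot$. I would make this a monad isomorphism by giving a distributive law of the lift monad over $\distmonad$, namely $\lambda_X \colon (\distmonad X)_\bot \to \distmonad(X + \{\bot\})$ which sends a distribution on $X$ to its pushforward along $X \hookrightarrow X + \{\bot\}$ and sends $\bot$ to the point distribution at $\bot$, then verifying the four coherence axioms and transporting the induced monad structure on $\distmonad \cdot (-)_\bot$ across the bijection. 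Equivalently, and more cleanly, one observes that the submonad inclusions $\distmonad \hookrightarrow \subdistmonad$ and $(-)_\bot \hookrightarrow \subdistmonad$ of the previous two paragraphs are such that the composite $\distmonad \cdot (-)_\bot \hookrightarrow \subdistmonad \cdot (-)_\bot \to \subdistmonad \cdot \subdistmonad \xrightarrow{\mu} \subdistmonad$ is exactly the ``defect-to-$\bot$'' bijection above; this is precisely the configuration in which~\cite{Beck1969} yields a distributive law and identifies $\subdistmonad$ with the composite monad, just as for the finite powerset monad. The main obstacle, as for that analogous statement, is the bookkeeping here: checking that this monad structure agrees under the bijection with $\subdistmonad$ \emph{as a monad} rather than merely as an endofunctor, which comes down to confirming that the transported multiplication is $\mu_X(\sum_i p_i \ket{e_i})(x) = \sum_i p_i\, e_i(x)$. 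The affine- and relevant-part identifications are comparatively immediate once the relevant constructions of~\cite{Jacobs1994} are in hand.
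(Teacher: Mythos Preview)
Your proposal is correct and substantially more detailed than what the paper offers: the paper states this proposition without proof, relying only on the analogy with the corresponding remark about the finite powerset monad and the reference to~\cite{Jacobs1994}. So there is no proof in the paper to compare against.

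Your three arguments are each sound. For the affine part, the pullback characterisation from~\cite{Jacobs1994} reduces exactly to the normalisation condition $\sum_x d(x) = 1$, giving $\distmonad$. For the relevant part, your equational analysis of $\doublestrength_{X,X}(d,d) = \subdistmonad(\delta_X)(d)$ correctly forces $d(x)d(y) = 0$ for $x \neq y$ and $d(x)^2 = d(x)$, leaving only the zero subdistribution and the Dirac deltas; you should perhaps remark explicitly that this equaliser subset is closed under the unit and multiplication of $\subdistmonad$, but in this case that is immediate. For the distributive law, your explicit $\lambda_X \colon (\distmonad X)_\bot \to \distmonad(X_\bot)$ is the standard one, and the natural bijection $\distmonad(X_\bot) \cong \subdistmonad X$ transports the composite monad structure to that of $\subdistmonad$ as you describe; the bookkeeping you flag is genuine but routine. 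In short, you have supplied the proof the paper omits, by exactly the route its footnote and its parallel remark on $\finitepowersetmonad$ suggest.
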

\begin{remark}
  As we saw for unquantified ambiguity, quantified ambiguity is affine, so forming combinations is idempotent.
  Intuitively, quantified confusion between~$x$ and itself is the same as knowing~$x$. Similarly to the unquantified
  case, we see that the subdistribution monad is exactly the result of combining quantified ambiguity and incomplete
  information.
\end{remark}

\begin{definition}
  For category~$\mathcal{C}$ we define the category~$\mathcal{C}_\subdistmonad$ as having:
  \begin{itemize}
  \item {\bf Objects}: The same objects as $\mathcal{C}$
  \item {\bf Morphisms}: We define~$\mathcal{C}_\subdistmonad(A,B) = \subdistmonad(\mathcal{C}(A,B))$
  \end{itemize}
  Composition is given as follows:
  \begin{equation*}
    \sum_j q_j \ket{ g_j } \circ \sum_i p_i \ket{ f_i } = \sum_{i,j} p_i q_j \ket{ g_j \circ f_i }
  \end{equation*}
\end{definition}
\begin{proposition}
  For category $\mathcal{C}$, $\mathcal{C}_\subdistmonad$ is the free subconvex algebra enriched category over~$\mathcal{C}$.
\end{proposition}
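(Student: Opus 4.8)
The plan is to recognise $\mathcal{C}_\subdistmonad$ as the result of changing the base of enrichment along the copower-with-unit functor, exactly as for $\mathcal{C}_\bot$, $\mathcal{C}_{\nonemptyfinitepowersetmonad}$, $\mathcal{C}_{\finitepowersetmonad}$ and $\mathcal{C}_\distmonad$ earlier in this section. First I would invoke the lemma that $\subdistmonad$ is commutative, so that by the proposition on Eilenberg--Moore categories of commutative monads $\subconvex = \eilmo{\subdistmonad}$ is a complete and cocomplete symmetric monoidal closed category, with universal bimorphisms for its tensor, with monoidal unit the free subconvex algebra $\subdistmonad\{*\}$, and with an isomorphism $\subdistmonad X \otimes \subdistmonad Y \cong \subdistmonad(X \times Y)$ under which the free-algebra functor $\subdistmonad : \cset \to \subconvex$ becomes strong monoidal (it is in addition a left adjoint, hence preserves coproducts).

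For a cocomplete symmetric monoidal closed $\mathcal{V}$, the free $\mathcal{V}$-enriched category over an ordinary category $\mathcal{C}$ is obtained by change of base along the strong monoidal left adjoint $(-) \cdot I : \cset \to \mathcal{V}$ (left adjoint to $\mathcal{V}(I,-)$), that is, by replacing each hom-set $\mathcal{C}(A,B)$ by the copower $\mathcal{C}(A,B) \cdot I$. Specialising to $\mathcal{V} = \subconvex$ and $I = \subdistmonad\{*\}$, and using that $\subdistmonad$ preserves coproducts while $\mathcal{C}(A,B) = \coprod_{\mathcal{C}(A,B)} \{*\}$ in $\cset$, this copower is precisely the free subconvex algebra $\subdistmonad(\mathcal{C}(A,B))$ --- the hom-object prescribed in the definition of $\mathcal{C}_\subdistmonad$. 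The composition of the free enriched category is the composite of $\subdistmonad$ applied to the composition of $\mathcal{C}$ with the comparison map $\subdistmonad(\mathcal{C}(B,C)) \otimes \subdistmonad(\mathcal{C}(A,B)) \to \subdistmonad\bigl(\mathcal{C}(B,C) \times \mathcal{C}(A,B)\bigr)$; by the universal bimorphism property this is the unique $\subconvex$-morphism that is a subconvex homomorphism separately in each argument and sends $\ket{g} \otimes \ket{f}$ to $\ket{g \circ f}$, and the identity $\subdistmonad\{*\} \to \subdistmonad(\mathcal{C}(A,A))$ is the unique homomorphism with $\ket{*} \mapsto \ket{1_A}$.

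It then remains only to check that the concrete data of $\mathcal{C}_\subdistmonad$ realise exactly these morphisms. The stated rule $\sum_j q_j \ket{g_j} \circ \sum_i p_i \ket{f_i} = \sum_{i,j} p_i q_j \ket{g_j \circ f_i}$ is visibly a subconvex homomorphism in each variable and restricts to $\ket{g}\otimes\ket{f} \mapsto \ket{g\circ f}$ on generators, hence coincides with the free composition by the uniqueness just noted, and $\ket{1_A}$ gives the free identities; the $\subconvex$-enriched associativity and unit axioms then hold, either by freeness or directly, since distributivity of this composition over subconvex combinations is immediate and so the conditions of Proposition~\ref{prop:enrichments} are met. (Equivalently one can verify the universal property head-on: any ordinary functor $H$ from $\mathcal{C}$ to the underlying category of a $\subconvex$-enriched $\mathcal{D}$ extends uniquely, on each hom-object, along the generators $\ket{-}$ to a subconvex homomorphism, and functoriality of the extension is again exactly the distributivity of Proposition~\ref{prop:enrichments}.) I expect no real obstacle beyond this final matching, which is word-for-word the same as in the preceding propositions of this section; the essential inputs are Jacobs' universal bimorphism property and the isomorphism $\subdistmonad X \otimes \subdistmonad Y \cong \subdistmonad(X \times Y)$.
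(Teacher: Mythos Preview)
The paper does not actually supply a proof of this proposition: like the analogous statements for $\mathcal{C}_\bot$, $\mathcal{C}_{\nonemptyfinitepowersetmonad}$, $\mathcal{C}_{\finitepowersetmonad}$ and $\mathcal{C}_\distmonad$, it is asserted without argument, the author having already pointed (at the end of the section on enriched categories) to the general free $\mathcal{V}$-enrichment construction in Kelly and Borceux for ``details''. Your proposal is correct and is precisely a spelling-out of that deferred argument in the specific case $\mathcal{V}=\subconvex$: you identify the free $\mathcal{V}$-category as change of base along the strong monoidal left adjoint $(-)\cdot I:\cset\to\subconvex$, use that the free-algebra functor $\subdistmonad$ realises this copower so that hom-objects are $\subdistmonad(\mathcal{C}(A,B))$, and then invoke Jacobs' universal-bimorphism description of the tensor together with $\subdistmonad X\otimes\subdistmonad Y\cong\subdistmonad(X\times Y)$ to match the concrete composition formula. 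There is nothing to compare against beyond noting that your argument instantiates exactly the general machinery the paper cites; it is the intended proof made explicit.
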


\begin{theorem}
  \label{thm:subdistmonad}
  If~$\mathcal{C}$ is a compact closed category then~$\mathcal{C}_S$ is
  a compact closed category. The action of the monoidal structure on morphisms
  extends that of~$\mathcal{C}$ as in~\eqref{eq:subdistmonoidal}.
  \begin{trivlist}\item
    \begin{minipage}{0.495\textwidth}
      \begin{equation}
        \label{eq:subdistmonoidal}
        \sum_i p_i \ket{f_i} \otimes_\subdistmonad \sum_j q_j \ket{ g_j } = \sum_{i,j} p_i q_j \ket{ f_i \otimes g_j }
      \end{equation}
    \end{minipage}
    \begin{minipage}{0.495\textwidth}
      \begin{equation}
        \label{eq:subdistdagger}
        (\sum_i p_i \ket{ f_i })^{\dagger_{\subdistmonad}} = \sum_i p_i \ket{ f^\dagger }
      \end{equation}
    \end{minipage}
  \end{trivlist}
  There is an identity and surjective on objects strict monoidal embedding~$\mathcal{C} \rightarrow \mathcal{C}_\distmonad$.

  If~$\mathcal{C}$ is a dagger compact closed category then~$\mathcal{C}_{\subdistmonad}$ is a dagger compact closed category with the
  dagger extending that of~$\mathcal{C}$ as in~\eqref{eq:subdistdagger}.
\end{theorem}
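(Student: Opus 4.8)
The plan is to follow the template established for~$\mathcal{C}_\bot$, whose proof was sketched above and which the paper notes applies \emph{mutatis mutandis} to the later constructions; the only feature distinguishing~$\mathcal{C}_\subdistmonad$ from~$\mathcal{C}_\distmonad$ is that coefficient sums are subnormalized rather than normalized, and this interferes with none of the steps.

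First I would check that~$\otimes_\subdistmonad$ as defined in~\eqref{eq:subdistmonoidal} is well defined in~$\subconvex$, i.e.\ sends a pair of subdistributions to a subdistribution: if~$\sum_i p_i \leq 1$ and~$\sum_j q_j \leq 1$ then~$\sum_{i,j} p_i q_j = \bigl(\sum_i p_i\bigr)\bigl(\sum_j q_j\bigr) \leq 1$, and the support stays finite. The same computation, with the coefficients literally unchanged, shows the pointwise dagger of~\eqref{eq:subdistdagger} lands in~$\subconvex$. Functoriality of~$\otimes_\subdistmonad$ and of~$(-)^{\dagger_\subdistmonad}$ — preservation of the identities~$\ket{1_A}$ and of composition — then reduces, by expanding the formal sums, to the corresponding facts in~$\mathcal{C}$, using that composition distributes over subconvex combinations on each side, as in~\eqref{eq:convexpreserve} read for subconvex sums. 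Abstractly this is just the statement that~\eqref{eq:subdistmonoidal} is the universal bimorphism~$\mathcal{C}_\subdistmonad(A,B) \times \mathcal{C}_\subdistmonad(A',B') \to \mathcal{C}_\subdistmonad(A \otimes A', B \otimes B')$ induced by~$\otimes$ of~$\mathcal{C}$, which exists because~$\eilmo{\subdistmonad} = \subconvex$ has universal bimorphisms.

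Next I would inherit the associator, unitors and symmetry from~$\mathcal{C}$ as the Dirac deltas~$\ket{\alpha}$, $\ket{\lambda}$, $\ket{\rho}$, $\ket{\sigma}$ of the corresponding structural isomorphisms, and likewise the unit and counit~$\ket{\eta_A}$, $\ket{\epsilon_A}$ of the compact structure. Naturality of these families with respect to the extended action~$\otimes_\subdistmonad$, the triangle and pentagon coherence axioms, the snake equations, and — in the dagger case — the dagger compactness axioms together with~$(f \otimes_\subdistmonad g)^{\dagger_\subdistmonad} = f^{\dagger_\subdistmonad} \otimes_\subdistmonad g^{\dagger_\subdistmonad}$, are all equalities between subconvex combinations of composites of morphisms of~$\mathcal{C}$; by linearity in each ket-summand they hold as soon as they hold on Dirac deltas, where they are precisely the axioms already available in~$\mathcal{C}$. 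Finally, the map~$f \mapsto \ket{f}$ is identity on objects, surjective on objects, faithful (distinct morphisms give distinct Dirac deltas), and strict monoidal by~\eqref{eq:subdistmonoidal} restricted to Diracs, and in the dagger case strictly dagger-preserving by~\eqref{eq:subdistdagger}.

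I do not expect a genuine obstacle: every step is a linearity argument pushing the claim down to the Dirac deltas, where~$\mathcal{C}$'s axioms apply, together with the one-line check that products and the pointwise dagger preserve subnormalization. If one wanted a slicker route, one could instead observe that~$\subdistmonad$ arises from~$\distmonad$ and the lift monad by the distributive law referenced above — concretely~$\subdistmonad X \cong \distmonad(X_\bot)$ — and deduce the result by combining the already-established~$\mathcal{C}_\distmonad$ and~$\mathcal{C}_\bot$ cases; but the direct verification is short enough that spelling out the bookkeeping is the honest thing to do.
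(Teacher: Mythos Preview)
Your proposal is correct and follows exactly the approach the paper intends: the paper gives no separate proof for this theorem, having sketched the argument once for~$\mathcal{C}_\bot$ and stated that later proofs of similar claims are analogous. You have faithfully unpacked that analogy, including the one genuinely new check (that products of subnormalized weights stay subnormalized), and your optional distributive-law route is a nice bonus observation consonant with the paper's remark that~$\subdistmonad$ arises from~$\distmonad$ and the lift monad.
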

The various free models of ambiguity and incomplete information that we have constructed
can be embedded into each other as follows:
\begin{proposition}
  \label{prop:embeddings}
  For category $\mathcal{C}$ there are identity and surjective on objects embeddings:
  \begin{equation*}
    \label{eq:embeddings}
    \begin{tikzpicture}[scale=0.5, node distance=2cm, ->]
      \node (tl) {$\mathcal{C}_{\nonemptyfinitepowersetmonad}$};
      \node[below of=tl, node distance=1cm] (bl) {$\mathcal{C}_{\distmonad}$};
      \node[right of=tl] (tm) {$\mathcal{C}_{\finitepowersetmonad}$};
      \node[below of=tm, node distance=1cm] (bm) {$\mathcal{C}_{\subdistmonad}$};
      \coordinate (mid) at ($(tm)!0.5!(bm)$);
      \node[right of=mid] (tr) {$\mathcal{C}_{\bot}$};
      \draw (tl) to node[above]{$E_{\nonemptyfinitepowersetmonad, \finitepowersetmonad}$} (tm);
      \draw (bl) to node[below]{$E_{\distmonad, \subdistmonad}$} (bm);
      \draw (tr) to node[above right]{$E_{\bot,\finitepowersetmonad}$} (tm);
      \draw (tr) to node[below right]{$E_{\bot,\subdistmonad}$} (bm);
    \end{tikzpicture}
  \end{equation*}
  where:
  \begin{trivlist}\item
    \begin{minipage}{0.495\textwidth}
      \begin{align*}
        E_{\bot,\finitepowersetmonad}(f) &=
        \begin{cases}
          \emptyset \text{ if } f = \bot\\
          \{ f \} \text{ otherwise }
        \end{cases}\\
        E_{\nonemptyfinitepowersetmonad, \finitepowersetmonad}(U) &= U
      \end{align*}
    \end{minipage}
    \begin{minipage}{0.495\textwidth}
      \begin{align*}
        E_{\bot,\subdistmonad}(f) &=
        \begin{cases}
          \sum_{\emptyset} \text{ if } f = \bot\\
          \ket{ f } \text{ otherwise }
        \end{cases}\\
        E_{\distmonad, \subdistmonad}(\sum_i p_i f_i) &= \sum_i p_i f_i
      \end{align*}
    \end{minipage}
  \end{trivlist}
  where~$\sum_\emptyset$ denotes the empty formal sum.
\end{proposition}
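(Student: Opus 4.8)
The plan is to exhibit each of the four maps $E_{T,T'}$ as the homset-wise action of a morphism of monads $\alpha : T \Rightarrow T'$ on $\cset$, and then to read off identity-on-objects, surjectivity-on-objects and faithfulness from elementary properties of $\alpha$. Concretely, the four relevant monad morphisms are: the inclusions $\nonemptyfinitepowersetmonad \hookrightarrow \finitepowersetmonad$ and $\distmonad \hookrightarrow \subdistmonad$; the map $(-)_\bot \Rightarrow \finitepowersetmonad$ with components $x \mapsto \{x\}$, $\bot \mapsto \emptyset$; and the map $(-)_\bot \Rightarrow \subdistmonad$ with components $x \mapsto \ket{x}$, $\bot \mapsto \sum_\emptyset$. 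For each I would first check the monad-morphism axioms (naturality, compatibility with the unit, compatibility with the multiplication): for the two inclusions this is immediate, and for the two maps out of $(-)_\bot$ it reduces to the observation that $\emptyset$ (respectively the zero subdistribution) is absorbing for unions (respectively for the multiplication of $\subdistmonad$).

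Next I would recall that, for a commutative monad $T$ on $\cset$, composition and monoidal product in $\mathcal{C}_T$ are given uniformly by $T$ applied to the composition and tensor maps of $\mathcal{C}$ precomposed with the double strength $\doublestrength$, while identities, cups and caps are the $\eta^T$-images of the corresponding data of $\mathcal{C}$, and the dagger (when present) is $T$ applied to that of $\mathcal{C}$ --- this is exactly how the theorems of Sections~\ref{sec:unquantified} and~\ref{sec:quantified} were established. Since the canonical strength and costrength are defined purely from the functor action, any monad morphism $\alpha$ commutes with them by naturality, and hence, using compatibility with $\mu$, with $\doublestrength$. Applying $\alpha$ homset-wise then yields a functor $E_{T,T'} : \mathcal{C}_T \to \mathcal{C}_{T'}$: preservation of composition and of the monoidal product follows from $\alpha$ commuting with $\doublestrength$ together with naturality of $\alpha$ at the composition and tensor maps of $\mathcal{C}$; preservation of identities, cups and caps follows from compatibility of $\alpha$ with the units; and commutation with the dagger follows from naturality. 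So in fact $E_{T,T'}$ is a strict monoidal, identity-on-objects functor, preserving the (dagger) compact closed structure. In each of the four concrete cases one may equally verify functoriality by direct inspection of the explicit composition formulas, since $E_{T,T'}$ is the identity (respectively sends $\bot$ to $\emptyset$, or to $\sum_\emptyset$) on the underlying data.

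It then remains to see that each $E_{T,T'}$ is an embedding, i.e. faithful, and surjective on objects. Surjectivity on objects is immediate, as $E_{T,T'}$ is the identity on objects and all the categories involved share the objects of $\mathcal{C}$. Faithfulness amounts to injectivity of each component $\alpha_{\mathcal{C}(A,B)}$: the inclusions $\nonemptyfinitepowersetmonad \hookrightarrow \finitepowersetmonad$ and $\distmonad \hookrightarrow \subdistmonad$ are injective by definition, and the two maps out of $(-)_\bot$ are injective because distinct singletons (respectively distinct Dirac distributions) are distinct and never equal to $\emptyset$ (respectively to $\sum_\emptyset$). Finally, substituting the explicit components of $\alpha$ into the homset-wise action recovers precisely the formulas stated for $E_{\bot,\finitepowersetmonad}$, $E_{\bot,\subdistmonad}$, $E_{\nonemptyfinitepowersetmonad,\finitepowersetmonad}$ and $E_{\distmonad,\subdistmonad}$.

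There is no serious obstacle here; the only point requiring a little care is the uniform verification that a monad morphism on $\cset$ respects the canonical strength, costrength and hence $\doublestrength$, which is what lets one treat all four maps at once and simultaneously obtain preservation of the (dagger) compact closed structure --- and even this can be sidestepped in favour of a mechanical case-by-case check against the concrete composition and tensor formulas.
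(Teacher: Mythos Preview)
Your proposal is correct. The paper itself states this proposition without proof, so there is nothing to compare against directly; the result is evidently regarded as routine given the explicit descriptions of the categories $\mathcal{C}_T$ in the preceding sections.

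Your approach via monad morphisms is a clean and uniform way to handle all four embeddings at once, and it is somewhat more conceptual than the obvious alternative of checking each functor by hand against the concrete composition formulas. It also buys you more than the proposition literally asserts: you obtain that each $E_{T,T'}$ is a strict monoidal functor preserving the (dagger) compact closed structure, not merely a faithful identity-on-objects functor. The one place to be slightly careful is the claim that a monad morphism automatically intertwines the double strengths; you have this right, but it does use both naturality of $\alpha$ and compatibility with $\mu$, not naturality alone. Your parenthetical remark that everything can alternatively be verified by direct inspection is also accurate and is presumably what the paper has in mind by omitting the proof.
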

Proposition~\ref{prop:embeddings} shows that~$C_{\finitepowersetmonad}$ allows us to model
non-quantitative ambiguity and partial information. Analogously, the category~$\mathcal{C}_{\subdistmonad}$
allows the description of quantitative incomplete and ambiguous information.
It seems natural to try and describe non-quantitative ambiguity using convex mixtures by
restricting attention to uniform distributions.
As such, one might expect vertical arrows in diagram~\eqref{eq:embeddings}
with embeddings~$\mathcal{C}_{\nonemptyfinitepowersetmonad} \rightarrow \mathcal{C}_{\distmonad}$
and~$\mathcal{C}_{\finitepowersetmonad} \rightarrow \mathcal{C}_{\subdistmonad}$. This turns
out not to be functorial, essentially as a result of:
\begin{counter}
  Uniform distributions are not closed under composition in the free \convex or \subconvex enrichments.
  For uniform mixture~$e = 0.5 \ket{f} + 0.5 \ket{ f \circ f }$, $e \circ e$~is not in general a uniform mixture once we collect terms.
\end{counter}
It seems the best we can do is to observe that unquantified phenomena give general \convex-enriched and \subconvex-enriched categories, beyond
the free ones:
\begin{theorem}
  \label{thm:generalenrichment}
  For every category~$\mathcal{C}$, $\mathcal{C}_{\nonemptyfinitepowersetmonad}$~is \convex-enriched,
  and~$\mathcal{C}_{\finitepowersetmonad}$ is \subconvex-enriched. In both cases sums are given by supports:
  \begin{equation*}
    \sum_{i \in I} p_i \ket{ f_i } = \{ f_i \mid i \in I \}
  \end{equation*}
\end{theorem}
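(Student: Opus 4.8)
The plan is to reduce the statement to the (affine) join-semilattice enrichments already established, using the elementary fact that every affine join semilattice is canonically a \convex-algebra and every join semilattice with bottom is canonically a \subconvex-algebra.

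First I would make this elementary fact explicit. On any join semilattice, define, for nonnegative reals $(p_i)_{i\in I}$ with $\sum_i p_i = 1$ (resp.\ $\sum_i p_i \leq 1$), the operation $\sum_{i\in I} p_i x_i := \bigvee\{\, x_i \mid p_i > 0 \,\}$, reading the empty join as the bottom element in the subconvex case. Since the weights are nonnegative, the index set $\{\, i \mid p_i > 0 \,\}$ is exactly the support of the underlying (sub)distribution, so the value does not depend on how the formal sum is written. I would then check the two (sub)convex-algebra axioms directly: the projection law $\sum_i p_i x_i = x_j$ when $p_j = 1$ is immediate, and the barycentre law $\sum_i p_i\bigl(\sum_j q_{ij} x_j\bigr) = \sum_j\bigl(\sum_i p_i q_{ij}\bigr) x_j$ reduces, using idempotency, commutativity and associativity of $\vee$, to the observation that for nonnegative weights $\sum_i p_i q_{ij} > 0$ holds precisely when $p_i > 0$ and $q_{ij} > 0$ for some $i$, so both sides equal the join of $\{\, x_j \mid \exists i,\ p_i > 0 \text{ and } q_{ij} > 0 \,\}$. (Equivalently, this is the functor $\ajslat \to \convex$, resp.\ $\jslat \to \subconvex$, induced by the ``take support'' monad morphism $\distmonad \Rightarrow \nonemptyfinitepowersetmonad$, resp.\ $\subdistmonad \Rightarrow \finitepowersetmonad$.)

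Next I would apply this to homsets. By the preceding propositions, $\mathcal{C}_{\nonemptyfinitepowersetmonad}$ is affine-join-semilattice enriched and $\mathcal{C}_{\finitepowersetmonad}$ is join-semilattice enriched, with join given by union of sets of morphisms and, in the latter case, bottom element $\emptyset$. Transporting the structure of the previous paragraph onto every homset equips $\mathcal{C}_{\nonemptyfinitepowersetmonad}$ with a homsetwise \convex-algebra structure and $\mathcal{C}_{\finitepowersetmonad}$ with a homsetwise \subconvex-algebra structure, and unwinding the induced operation shows it is precisely the support formula $\sum_{i\in I} p_i \ket{f_i} = \{\, f_i \mid i \in I \,\}$ of the statement. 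Finally, the compatibility equations~\eqref{eq:convexpreserve} required by Proposition~\ref{prop:enrichments} come for free: composition in these categories already distributes over the joins on both sides — this is exactly equation~\eqref{eq:joinpreserve}, which holds since the categories are (affine) join-semilattice enriched — so $\bigl(\sum_i p_i f_i\bigr) \circ g = \bigl(\bigvee_{p_i > 0} f_i\bigr) \circ g = \bigvee_{p_i > 0} (f_i \circ g) = \sum_i p_i (f_i \circ g)$, and symmetrically on the other side, for subconvex combinations as well. Proposition~\ref{prop:enrichments} then delivers the two enrichments.

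The step that needs the most care is the bottom/empty-combination bookkeeping that separates the two cases: $\mathcal{C}_{\finitepowersetmonad}$ has $\emptyset$ as a homset bottom and so accommodates the empty subconvex combination $\sum_\emptyset$, whereas the homsets of $\mathcal{C}_{\nonemptyfinitepowersetmonad}$ are non-empty and admit only genuinely normalized (affine) combinations — this is exactly what pins down \subconvex versus \convex. Everything else is the routine semilattice bookkeeping sketched above.
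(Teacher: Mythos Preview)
Your proposal is correct and follows essentially the same two-step plan as the paper's (very terse) proof: verify that the support formula endows each homset with a (sub)convex-algebra structure, then verify the compatibility equations~\eqref{eq:convexpreserve} of Proposition~\ref{prop:enrichments}. The paper simply declares both steps ``straightforward calculations''; you actually carry them out, and for the second step you give a slightly cleaner reduction---rather than checking~\eqref{eq:convexpreserve} from scratch, you observe that it follows immediately from the join-distributivity~\eqref{eq:joinpreserve} already guaranteed by the free (affine) join-semilattice enrichment. That factorisation through the established semilattice structure (equivalently, through the support monad morphisms $\distmonad \Rightarrow \nonemptyfinitepowersetmonad$ and $\subdistmonad \Rightarrow \finitepowersetmonad$) is a small but genuine clarification over a bare direct computation.
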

\begin{proof}
  We must confirm that taking supports gives \convex-algebra or \subconvex-algebra structure as required.
  It then remains to confirm the equations specified in proposition~\ref{prop:enrichments} hold. Both
  are straightforward calculations.
\end{proof}
Using the embeddings of proposition~\ref{prop:embeddings}
and the general enrichments noted in theorem~\ref{thm:generalenrichment}, all the phenomena
of interest are captured in some way by \subconvex-enrichment.

\section{Conclusion}
In this paper we have considered freely extending the dagger compact closed categories
used in categorical distributional models of meaning with sufficient algebraic structure
to describe incomplete and ambiguous information. This was done in a systematic manner,
constructing suitable bases for enrichment using monad theoretic principles. Our free
models effectively record and combine the details of the information we lack. The data
explicitly carried by this effectively syntactic construction suggests some algorithmic possibilities
to be explored in later work.

Clearly, models other than the free models are of interest. For example, the category
of Hilbert spaces and completely positive maps is subconvex algebra enriched, and
\standardrel is join semilattice enriched. The category~$\standardrel_{\subconvex}$ provides
a category that allows non trivial mixing of relations with scalars in~$[0,1]$.
The author is unaware of other (non-free) models involving relations
that allow non-trivial mixing with real scalars, and this remains an open question.

\subsection*{Acknowledgements}
I would like to thank Martha Lewis, Bob Coecke, Samson Abramsky and Robin Piedeleu
for feedback and discussions.
This work was partially funded by the AFSOR grant ``Algorithmic and Logical Aspects when Composing Meanings''
and the FQXi grant ``Categorical Compositional Physics''.

\bibliographystyle{eptcs}
\bibliography{mixing}

\end{document}